\documentclass[year=26]{fmcad}

\usepackage{cite}
\usepackage{amsmath,amssymb,amsfonts}
\usepackage{algorithmic}
\usepackage{graphicx}
\usepackage{textcomp}
\usepackage{xcolor}

\usepackage{url}
\usepackage{wrapfig}
\usepackage{lipsum}
\usepackage{pifont}
\usepackage[inline]{enumitem}
\usepackage[T1]{fontenc}
\usepackage{xcolor}
\usepackage[scaled=0.9]{beramono}
\usepackage{caption}
\usepackage{subcaption}
\usepackage{listings}
\usepackage{multirow, makecell}
\usepackage{booktabs}
\usepackage{amsthm}
\usepackage{tikz}
\usepackage{pgfplots}
\usepackage{xspace}

\hypersetup{colorlinks=true,breaklinks=true}

\theoremstyle{definition}
\newtheorem{theorem}{Theorem}
\newtheorem{definition}{Definition}
\newtheorem{lemma}{Lemma}

\newtheorem{fact}{Fact}

\usetikzlibrary{positioning}
\usetikzlibrary{tikzmark}
\usetikzlibrary{calc}
\tikzset{
    node1/.style={circle, draw=red!60, fill=red!5,minimum size=7mm},
    node2/.style={circle, draw=blue!60, fill=blue!5,minimum size=7mm},
    noderoot/.style={node2, very thick},
    subnet/.style={rectangle,rounded corners=.1cm,text=white,fill=red!70,minimum size=3mm,font=\small},
    annot/.style = {text width=4em, text centered, node distance = 6mm}
}

\pgfplotsset{width=10cm,compat=1.9}

\def\showcomments{0}

\ifnum\showcomments=1
\newcommand{\dz}[1]{{\footnotesize\color{blue}[dz: #1]}}
\newcommand{\dpw}[1]{{\footnotesize\color{blue}[dpw: #1]}}
\newcommand{\ag}[1]{{\footnotesize\color{purple}[ag: #1]}}
\newcommand{\tim}[1]{{\footnotesize\color{brown}[tim: #1]}}
\newcommand{\todo}[1]{{\footnotesize\color{red}[TODO: #1]}}
\else
\newcommand{\dz}[1]{}
\newcommand{\dpw}[1]{}
\newcommand{\ag}[1]{}
\newcommand{\tim}[1]{}
\newcommand{\todo}[1]{}
\fi

\newcommand{\OMIT}[1]{{}}

\newcommand{\IE}{i.e.,\xspace}
\newcommand{\EG}{e.g.,\xspace}

\newcommand{\sysname}{\textsc{CB-Ver}\xspace}

\newcommand{\synthname}{\textsc{CB-2IQ}\xspace}

\newcommand{\remove}[1]{} 

\newcommand{\cbgraph}{CB-graph\xspace}
\newcommand{\cbgraphs}{CB-graphs\xspace}
\newcommand{\cbedge}{CB-edge\xspace}
\newcommand{\cbedges}{CB-edges\xspace}
\newcommand{\cbroot}{CB-root\xspace}
\newcommand{\cbroots}{CB-roots\xspace}

\newcommand{\cbedgecheck}{CB-edge-check\xspace}

\newcommand{\convcheck}{\cbedgecheck}

\newcommand{\Invcheck}{Invariance\xspace}
\newcommand{\invcheck}{invariance\xspace}

\newcommand{\vc}[1]{VC_{\mathrm{#1}}}
\newcommand{\vcinit}{\vc{Init}}

\newcommand{\vcprop}{\vc{Prop}}

\newcommand{\vcinv}{\vc{Inv}}
\newcommand{\vcroot}{\vc{CBroot}}
\newcommand{\vcedge}{\vc{CBedge}}

\newcommand{\lp}{\mathrm{lp}}
\newcommand{\len}{\mathrm{len}}
\newcommand{\aspath}{\mathrm{aspath}}

\newcommand{\interface}{component specification\xspace}
\newcommand{\Interface}{Component specification\xspace}
\newcommand{\interfaces}{component specifications\xspace}

\newcommand{\Iinterface}{invariant specification\xspace}
\newcommand{\Qinterface}{abstract convergence specification\xspace}

\newcommand{\cbgenerate}{generate\xspace}
\newcommand{\cbgenerates}{generates\xspace}
\newcommand{\cbgenerated}{generated\xspace}

\newcommand{\cmark}{\ding{51}}
\newcommand{\xmark}{\ding{55}}

\newcommand{\bench}[1]{{\small\texttt{#1}}}

\newcommand{\para}[1]{\noindent\textit{#1.}}

\newcommand{\codelink}[2]{
  \href{#2}{\colorbox{gray!15}{{\fontfamily{fvm}\selectfont\small\textcolor{black}{#1}}}}
}

\newtheorem{manualtheoreminner}{Theorem}
\newenvironment{manualtheorem}[1]{
  \IfBlankTF{#1}
    {\renewcommand{\themanualtheoreminner}{\unskip}}
    {\renewcommand\themanualtheoreminner{#1}}
  \manualtheoreminner
}{\endmanualtheoreminner}

\def\BibTeX{{\rm B\kern-.05em{\sc i\kern-.025em b}\kern-.08em
    T\kern-.1667em\lower.7ex\hbox{E}\kern-.125emX}}

\begin{document}

\date{}

\title{\sysname: A Stable Foundation for Modular Network Control Plane Verification \\
(extended version)
}

\author{\IEEEauthorblockN{
Dexin Zhang\orcid{0009-0005-9009-9503}\textsuperscript{*},
Timothy Alberdingk Thijm\orcid{0000-0003-1758-5917}\textsuperscript{\dag},
David Walker\orcid{0000-0003-3681-149X}\textsuperscript{*},
Aarti Gupta\orcid{0000-0001-6676-9400}\textsuperscript{*}}
\IEEEauthorblockA{
\textsuperscript{*}\textit{Princeton University}, Princeton, USA, \{dz7903,dpw,aartig\}@cs.princeton.edu\\
\textsuperscript{\dag}tthijm.fmcad26@gmail.com
}
}

\maketitle

\begin{abstract}
Network operators are often interested in verifying  \emph{eventually-stable properties} of network control planes: properties of control plane states that hold eventually, and hold forever thereafter, provided the operating environment remains unchanged. 
In this work, we introduce \sysname, a new modular framework for verifying such properties, based on the key idea of a \emph{converges-before graph} (\cbgraph for short).
Given user-provided specifications for each network component, \sysname checks the necessary requirements in parallel using an SMT solver and automatically \cbgenerates a \cbgraph. Verification succeeds if all requirements are met and the \cbgenerated \cbgraph connects all nodes in a network. We formalize our verification algorithm in the Lean theorem prover and prove its soundness. Moreover, the \cbgenerated \cbgraph can be used to determine fault tolerance properties of the network. We also consider the inverted problem: Given a connected \cbgraph, we can automatically synthesize suitable component specifications by formulating the problem as a Constrained Horn Clause (CHC) problem that ensures the given correctness property.
We evaluate the performance of \sysname and the synthesis tool (along with comparison to other tools) to demonstrate that they can successfully verify expressive properties on a range of benchmarks.

\end{abstract}

\begin{IEEEkeywords}
Networks, control plane verification, modular verification, invariants, invariant synthesis.  
\end{IEEEkeywords}

\section{Introduction}
\label{sec:intro}

Network control planes define the paths to forward traffic from routers to destinations, using distributed routing protocols (such as BGP, OSPF, RIP, and others) that are configured by network operators.
In today's large cloud provider networks, these configuration files can be extremely challenging to update correctly due to the network's scale and complexity~\cite{rogers-outage-2022}.
To verify that updates are correct and prevent network outages, researchers have developed control plane verification techniques~\cite{rcdc,bagpipe,minesweeper,arc,abhashkumar2020tiramisu,prabhu2020plankton,ye2020accuracy,beckett2018control,beckett2019abstract,acorn,expresso} (discussed in detail, \S\ref{sec:related}).
However, they often fail to scale on large networks, due to the inherent verification complexity.

Consequently, there is growing interest in \emph{modularly} verifying the correctness of networks' control planes~\cite{rcdc,kirigami,timepiece,lightyear}.
In modular network verification, a correctness property like
\emph{eventually-stable reachability}
(``router \textit{X} eventually always has a path to a destination \textit{D}'') is decomposed into subproblems on the components of the network.
These subproblems check that each component is correct with respect to a suitable \emph{\interface}, assuming that all other components also respect their \interfaces.
This modular approach scales much better than monolithic verification, but requires users to supply correct \interfaces for each component.
Writing such a \interface can be challenging and burdensome: for example, Lightyear~\cite{lightyear} requires users to specify invariants \emph{and} a network path (along which routing messages flow) to prove a liveness property;
Timepiece~\cite{timepiece} requires users to specify a function describing how each component's routing state changes at each time instant, and applies only to synchronous schedules. 

We have developed a new system \sysname for modular network verification with distinct advantages over prior work. 
Unlike Lightyear~\cite{lightyear}, properties that depend on route preferences can be proven by \sysname, and users do not have to supply a path -- a supporting graph is inferred automatically. 
Unlike Timepiece~\cite{timepiece}, \sysname is sound for all schedules, and users need not refer to exact times in their specifications.
In addition, \sysname automatically infers the degree of fault tolerance of any verified property, a key feature of interest to network engineers, not provided by other modular systems.

The approach in \sysname is to focus on \emph{eventually-stable} correctness properties of network behavior, \IE properties that hold at some future time and forever after that%
\footnote{In linear temporal logic, these properties can be expressed as $FGp$, where $F$/$G$ are eventually/globally temporal operators and $p$ is propositional.}.
In addition to properties, every network router (node) $v$ is associated with two \interfaces, $I(v)$ and $Q(v)$. (These can be regarded as per-component annotations, which are common in deductive program verification~\cite{dafny,verus} in the form of per-method annotations.)
Specifically, $I$ overapproximates the set of states the node can enter at any time during the routing decision process, while $Q$ overapproximates its eventually-stable states. Intuitively, $I$ is similar to an invariant, and we call $Q$ an \emph{abstract convergence set} because it overapproximates the states to which the network node converges.

The key idea in \sysname is to automatically infer a \emph{converges-before graph (\cbgraph)}\footnote{This name is inspired by the standard "happened-before relationship"~\cite{lamport-clock}.}, which serves to provide a witness for the network convergence process. 
Rather than require a user to provide a \cbgraph, \sysname automatically generates it from the user-supplied annotations ($I, Q$).
For verification to succeed, the generated \cbgraph \emph{must connect} all nodes to the origin(s) of the routing process.
We have formalized this theory in Lean~\cite{lean} and proved that our algorithm is sound with respect to a standard asynchronous network semantics.

In addition, the \cbgraph \cbgenerated by \sysname has the maximal connectivity for the given $I$ and $Q$: if edges of a \cbgraph fail, then connectivity is reduced.
Hence, by analyzing the connectivity of the \cbgraph, it is possible to determine the degree of fault tolerance of any proven property---a key concern for network engineers building resilient systems. 

Finally, we also consider the inverted problem: when a connected \cbgraph is user-supplied, can suitable component specifications $I$ and $Q$ be synthesized automatically for a given property? We use our theory to formulate this as a CHC (Constrained Horn Clause) problem~\cite{ChcFlc15}, and use an off-the-shelf CHC solver for demonstrating this capability.

\para{Summary of Contributions}
\begin{itemize}
\itemsep0em
\item \emph{Modular control plane verification}: We present the theory (\S\ref{sec:theory}) and algorithm for \sysname (\S\ref{sec:tool}). Given \interfaces ($I, Q$) and target properties, it generates and checks per-node and per-edge verification conditions (VCs) in parallel using Satisfiability Modulo Theory (SMT) solvers~\cite{z3}, and performs a lightweight global graph connectivity check to prove correctness.
\item \emph{\cbgraphs for sound modular reasoning} (\S\ref{subsec:cbgraphs}): \sysname automatically \cbgenerates a \cbgraph, which provides a structure for well-founded induction in reasoning about correctness. We formalize our theory of abstract convergence and the verification algorithm in Lean~\cite{lean} and prove its soundness.
\item \emph{Automated fault analysis via \cbgraphs} (\S\ref{subsec:fault-tolerance}): We show that if the \cbgenerated \cbgraph is connected despite $k$ arbitrarily selected \cbgraph edge failures, then the verified property is $k$-fault tolerant.
\item \emph{Automated synthesis of \interfaces}   (\S\ref{sec:synth}): When a connected \cbgraph is given, we can automatically synthesize suitable $I$ and $Q$ for each node by formulating a CHC problem based on our theory.  
\item \emph{Evaluation} (\S\ref{sec:eval}): We have developed prototype implementations of \sysname and the synthesis tool (\synthname), and evaluate them on a range of benchmarks, illustrating the expressiveness of the properties, the performance, and comparison with other tools. 
\end{itemize}

In comparison to the most closely related modular verifiers (more details in \S\ref{sec:related}), our framework focuses on a broad, new category of eventually-stable properties, has different (simpler, we would argue) \interfaces, adds failure analysis, and comes with an additional tool for \interface synthesis.
\section{\sysname: The Theory}
\label{sec:theory}

\subsection{Network Model and Semantics}

The \emph{network control plane} is a set of \emph{routers} that exchange messages with each other about available paths (routes) to a destination using standard \emph{routing protocols} such as BGP, OSPF, RIP~\cite{bgp,ospf,rip}.
Each router maintains a route that is most preferred among all available ones from its neighbors (with the preference relation depending on the protocol).  
Routers can modify or drop routes from their neighbors during message transfer, based on the configuration of export and import policies.
\footnote{In our model, we use the term "route" to refer to both states and messages.}
They repeatedly and asynchronously exchange routes, and the network control plane (hopefully) converges with every router selecting its best route to a destination. 

We represent network control planes using a standard network model, inspired by routing algebras~\cite{griffin2005metarouting,sobrinho2005algebraic},
where each network instance considers one destination prefix (\IE{} a set of IP addresses).
This model has been applied to a wide range of routing protocols used in practice, including BGP, RIP, OSPF.

Formally, a \emph{network instance} is a 5-tuple $N=(G,R,\mathrm{init},f,\oplus)$, where:
\begin{itemize}
    \itemsep0em
    \item $G=(V,E)$ is a simple, finite directed graph. The nodes $V$ are routers and directed edges $E \subseteq V \times V$ are links.
    \item $R$ is the set of routing messages, or simply \emph{routes}, that can be exchanged between nodes. $R$ varies according to the protocol; $\infty\in R$ (invalid route) is used when no route is selected.
    \item $\mathrm{init}:V\to R$ is a function that maps each node $v \in V$ to its initial route $\mathrm{init}(v) \in R$, which is the route that originates from $v$ or is imported from external sources at a border node. It is $\infty$ if there are no routes originated or imported at node $v$.
    \item $f:E\to(R\to R)$ maps each edge $(u,v)\in E$ to a transfer function $f(e):R\to R$ (also written as $f_e$). For instance, $f_{(u,v)}$ is the transfer function for routes sent from $u$ to $v$, based on the configuration of export and import policies. 
    \item $\oplus:R\times R\to R$ is a binary associative, commutative and selective function (the \emph{merge} function) that selects the best route between two routes. It is defined by each protocol; $\infty$ is required to be the least preferred route.
\end{itemize}

\para{Encoding network model in SMT}
Modeling routing protocols using SMT theories is a well-understood process~\cite{minesweeper}. The set of routes $R$ is represented as a record of fields, which depend on the specific protocol, and each field is encoded as a sort (i.e., type) in some first-order theory~\cite{BarFT-SMTLIB}.
The other components --- $\mathrm{init}, f, \oplus$ --- are modeled as expressions and formulas in first-order theories.
Full details of our SMT encoding of the BGP protocol are available in Appendix~\ref{appendix-subsec:smt}.

\para{Asynchronous schedules and fairness}
To model an asynchronous network, we follow the approach of Üresin and Dubois~\cite{uresin1990parallel} and Daggitt et al.~\cite{daggitt2018asynchronous}. 
In this model, times are natural numbers $\mathbb{T}=\mathbb{N}$, and a \emph{schedule} for a network $N$ is a pair $(\alpha,\beta)$ with:
\begin{enumerate}
    \itemsep0em
    \item An activation function $\alpha:\mathbb{T}\to2^{V}$ that maps a time $t$ to a set of routers $\alpha(t)$ activated at that time.
    \item A data flow function $\beta:E\to(\mathbb{T}\to\mathbb{T})$ that defines
    the time taken to traverse each link $e=(u,v)$. 
    If $t_1=\beta(e)(t_2)$ (also written $\beta_e(t_2)$), the route sent by $u$ at time $t_1$ will arrive at $v$ at time $t_2$. Messages must be received after they are sent and hence $t_2>t_1$ in all cases. 
\end{enumerate}

These asynchronous schedules can model arbitrary delay, loss, duplication,  or reordering of any message in transit.
When proving soundness, we additionally assume that nodes and edges are not \emph{failed}: they activate infinitely often
and send/receive up-to-date messages infinitely often (formal definitions are presented in Appendix~\ref{appendix-subsec:proof}).
Such schedules with no failures are called \emph{fair schedules}.
In \S\ref{subsec:fault-tolerance}, we consider tolerance under edge failures; elsewhere, we assume a schedule is fair unless stated otherwise.

\para{Network semantics and concrete convergence} The \emph{network semantics} of $N$ with respect to a schedule $S=(\alpha,\beta)$ is defined as the following function $\sigma_S:V\times\mathbb T\to R$. (This is the same as in prior work on routing algebras~\cite{daggitt2018asynchronous}.)
\begin{equation*}
\sigma_S(v,t)=
\begin{cases}
\mathrm{init}(v) & t=0 \\
\sigma_S(v,t-1) & t>0\land v\notin\alpha(t) \\
\multispan2{$\mathrm{init}(v)\oplus\bigoplus_{(u,v)\in E}f_{u,v}(\sigma(u,\beta_{u,v}(t)))$\hfill} \\
\phantom{\bigoplus_{(u,v)\in E}f_{u,v}(\sigma(u,\beta_{u,v})} & t>0\land v\in\alpha(t)
\end{cases}
\end{equation*}
At $t=0$, each node $v$ selects its initial route $\mathrm{init}(v)$.
Henceforth, if $v$ is activated at time $t$ (\IE $v\in\alpha(t)$), it will select the best route among all routes received from its neighbors (possibly delayed). Otherwise, its route remains the same as at time $t-1$.

For a given schedule $S$ and network $N$, 
we say node $v$ \emph{(concretely) converges to} $r$ if
$\exists \tau.\ \forall t\ge\tau,\ \sigma_S(v,t)=r$.  In other words,
$v$ converges to $r$ if $\sigma_S(v,t)$ is equal to $r$ for some $t$ and forever after.
At the network level, we say the network semantics $\sigma_S$ \emph{converges to} $\sigma_S^*:V\to R$ if $v$ converges to $\sigma_S^*(v)$ for each node $v$. Such a state $\sigma_S^*$ is called a converged state of the network.
Note that a network may have multiple converged states for different schedules $S$ (this 
can happen in real networks~\cite{batfish-lessons}), or it may have no converged state (\EG it may oscillate between a set of states~\cite{griffin2002stable}).

\subsection{Running Example and Correctness Properties}

\begin{figure}
    \centering
    \scalebox{0.8}{
    \begin{tikzpicture}
        \node[node1](A){$A$};
        \node[node2,above right=0.5cm and 1cm of A](B){$B$};
        \node[node2,below right=0.5cm and 1cm of A](C){$C$};
        \node[node2,below right=0.5cm and 1cm of B](E){$E$};
        \draw[-](A)--(B);
        \draw[-](A)--(C);
        \draw[-](B)--(E);
        \draw[-](C)--(E);
    \end{tikzpicture}
    }
    \caption{\label{fig:eg1}Running Example: a toy network with four nodes; $A$ is the destination and originates a route to itself; local preference is set to 300 when $E$ imports a route from $B$.}
\end{figure}

Consider a simple network, shown in Fig.~\ref{fig:eg1}, with four nodes running the BGP routing protocol. A network operator would like to prove that node $E$ has a route to destination node $A$ after the network has converged, 
and moreover that the route goes through $B$ rather than $C$ (perhaps the former route is less costly than the latter). 
To implement this preference, the network operator uses BGP's
local preference mechanism, setting the local preference attribute to 300 when $E$ imports routes from $B$,
and leaving it as the default value 100 everywhere else.
When node $E$ receives routes from multiple neighbors, it chooses the route with the highest local preference (in this case, the route from $B$ with value 300 over the route from $C$ with value 100).  
The routing process starts with node $A$ \emph{originating} a route by announcing a route to itself, and no other node having an initial route to $A$.
In the rest of the paper, we will use this as a running example to illustrate key ideas and concepts in our theory.

\para{Formal definition of correctness properties} 
A user-provided correctness property $Y:V\to 2^R$ maps each node $v$ to some predicate (formula) $Y(v)$ on routes $R$. 
Our theory supports verification of \emph{eventually-stable} properties associated with such state predicates $Y$. More formally, we verify that for all schedules $S$:
$$\forall v.\ \exists\tau.\ \forall t\ge\tau.\ \sigma_S(v,t)\in Y(v).$$

\para{Running example}
In our example, suppose we have the following properties of interest:
\begin{itemize}
    \item $Y_1(E)$: Will node $E$ be able to reach $A$?
    \item $Y_2(E)$: Will node $E$ select a path to $A$ using a route from node $B$ rather than from node $C$? 
\end{itemize}
The first property $Y_1(E)$ expresses a simple reachability property, while the second $Y_2(E)$ captures a preference property.
The user is interested in checking that these properties hold at some point and forever after (provided changes to the operating environment do not occur, e.g., configuration/policy changes).
To express these properties in terms of a route state $s$ at $E$,
we define a routing field $\aspath(s)$ that represents the AS path of $s$, represented as a list of AS numbers of routers in the routing path so far: 
\begin{align*}
Y_1(E)&=\{s\mid s\ne\infty\} \\
Y_2(E)&=\{s\mid s\ne\infty\land C\notin\aspath(s)\}
\end{align*}

\noindent Note that the predicates here refer implicitly to the destination $A$, \IE{} all routes $s$ are interpreted as routes to $A$ (and $s \ne\infty$ represents some available route to $A$). 

\para{Expressiveness of Properties} \sysname requires properties $Y$ to be written as per-node predicates over routes $R$, to enable modular verification.
However, global properties or path-dependent properties can also be expressed in \sysname in a modular form through decomposition or the introduction of ghost variables and ghost code. For example, global connectivity (\IE connectivity between all or certain nodes) can be decomposed into single-destination connectivity per destination $d$. For each $d$, we can define property $Y(v)$ to be ``$v$ can reach the destination $d$'' (as shown in the running example above for a single destination $A$), and then verify separately for all/certain destinations $d$.
For path-dependent properties, we can introduce ghost fields that are updated along the path. 
For example, to verify that any route selected by $v$ must go through a node in the set of nodes $W$, one can add a Boolean ghost field $\mathrm{ViaW}$ that is set to true when a route is imported by a node in $W$. 
We can then define $Y(v) = \{ s \mid \mathrm{ViaW}(s) = \mathrm{true}\}$.
Use of ghost fields is also common in deductive program verification~\cite{dafny,verus}.
We note that properties beyond path-dependent properties, such as relational properties between arbitrary routers, may require some heavy instrumentation (\EG via a product construction).
For example, a guarantee that router A selects ACE if and only if router B selects BDE is a relational property.

\subsection{Abstracting Convergence for Proving Correctness}
\label{subsec:interfaces}

\sysname requires the user to provide an annotation $Q$, which we call an \emph{abstract convergence set}. 
Formally, $Q: V\to 2^R$, which maps each node $v$ to some predicate (formula) $Q(v)$ on routes. 
Given a specific network instance and (fair) schedule $S$, we say that $\sigma_S(v,t)$ \emph{abstractly converges to $Q(v)$} if:
$$\exists \tau.\ \forall t\ge \tau,\ \sigma_S(v,t)\in Q(v).$$

Note that if $\sigma_S(v,t)$ abstractly converges to $Q(v)$, this schedule may not actually converge to a \emph{single} concrete state in $Q(v)$, 
but if it does, then the concrete converged route $\sigma_S^*(v)$ must be included in the set $Q(v)$.
We say \emph{a node $v$ abstractly converges to $Q(v)$} if $\sigma_S(v,t)$ abstractly converges to $Q(v)$ for all schedules $S$.

In the rest of the paper, we say ``abstractly converges to $Q(v)$'' to signify that we are interested in reasoning about convergence to a \emph{set of states}, rather than convergence to a single state. 
This abstraction of convergence via an overapproximation is novel to our approach and provides crucial benefits.
The user does not have to specify \emph{exactly} which state a node converges to, only a set containing the concrete state(s); moreover,
if a network does not actually converge to a single state, but loops within a subset of the states characterized by $Q$, 
violations of expected properties in such states will still provide useful feedback to users. 
(Other related work on guaranteeing (unique) convergence in routing protocols~\cite{daggitt2018asynchronous,daggitt2018rate} is orthogonal to our work, see more details in \S\ref{sec:related}. \OMIT{more details in \S\ref{sec:related}.})

Our two-step strategy for proving correctness is to (1) establish that all nodes $v$ abstractly converge to $Q(v)$, 
and (2) for each node $v$, check that every route in $Q(v)$ satisfies $Y(v)$. If any step fails, then verification fails. 
The second step is easy to do: we use an SMT solver to check the validity of $Q(v) \Rightarrow Y(v)$ (where $\Rightarrow$ denotes propositional implication). 
However, the first step is more challenging, and is discussed in the next subsection.

\para{Running example}
We consider the second step for our running example. To prove $Y_1(E)$, it suffices to choose $Q_1(v) = \{s\mid s\ne\infty\}$ for $v\in\{A,B,C,E\}$.
Note that $Q_1(E)=Y_1(E)$ here, so the second step holds trivially. 
However, to prove $Y_2(E)$, we need a stronger \interface $Q_2(E)$ ($\lp(s)$ represents local preference of the route $s$, and $\len(s)$ represents the length of the AS path of $s$):

\vspace*{-0.15in}
{\small
\begin{equation*}
    Q_2(E)=\{s\mid s\ne\infty\land\lp(s)=300\land\len(s)=2\land C\notin\aspath(s)\}
\end{equation*}}

Furthermore, the \interfaces $Q_2(v)$ at other nodes $v$ must also be more specific to affirm the absence of node $C$ along the route transferred from $A$ through $B$ to $E$:

\vspace*{-0.15in}
{\small
\begin{align*}
    Q_2(A)&=\{s\mid s\ne\infty\land\lp(s)=100\land\len(s)=0\land C\notin\aspath(s)\} \\
    Q_2(B)&=\{s\mid s\ne\infty\land\lp(s)=100\land\len(s)=1\land C\notin\aspath(s)\} \\
    Q_2(C)&=\{s\mid s\ne\infty\land\lp(s)=100\land\len(s)=1\}
\end{align*}
}

Given the above $Q(v)$ \interfaces for each node $v$, it is straightforward to check that for each property, $Q_1(E) \Rightarrow Y_1(E)$ and $Q_2(E) \Rightarrow Y_2(E)$.

\subsection{\cbgraphs: Key Structure for Sound Reasoning}
\label{subsec:cbgraphs}

Recall that we still need to establish the first step: that all nodes $v$ abstractly converge to their given specifications $Q(v)$, \IE $Q(v)$ is a \emph{sound} abstract convergence set for $v$.
To help with this, a user also provides a specification $I(v)$ for each node $v$ --- we call $I$ an \emph{invariant}. 
To show that $I(v)$ is sound, we check that it overapproximates the set of routes that node $v$ selects at any time during the routing process. 

\sysname performs additional checks to ensure that the user-provided specifications $I(v)$ and $Q(v)$ are sound. A justification at node $v$ could depend on the specifications of its neighbor nodes, but not if they, in turn, are justified recursively by the specifications at $v$. This circular reasoning would clearly be unsound. 

To ensure soundness of our theory, we introduce the \cbgraph.
Intuitively, a \cbgraph provides a witness that all nodes $v$ in the network abstractly converge to their $Q(v)$.
\OMIT{a witness for all nodes $v$ in the network to abstractly converge to their $Q(v)$.}
Formally, a \cbgraph consists of \cbroots and \cbedges, which are defined as follows:
\begin{itemize}
    \itemsep0em
    \item \emph{\cbroots}: a non-empty subset of nodes $w \in V$, where each $w$ must select a route $\sigma_S(w,t)\in Q(w)$ at any time $t$ (starting from $t=0$), for any schedule $S$. 
    In practice, a node that originates routes as a destination, or acts as a border node (that imports routes from external sources) is a \cbroot.
    
    \item \emph{\cbedges}: a subset of directed edges $(u,v)$ in the network, such that for any schedule $S$, if $u$ selects a route $\sigma_S(u,t_1)\in Q(u)$ at time $t_1$, and transfers this route along edge $(u,v)$, then $v$ \emph{must} select a route $\sigma_S(v,t_2)\in Q(v)$, where $t_2$ is the time that $v$ receives and merges $u$'s route.
\end{itemize}

\para{Running example}
Fig.~\ref{fig:example-cbgraphs} shows the \cbgraphs \cbgenerated by \sysname for our running example with \interfaces $I_1,Q_1$ and $I_2,Q_2$, respectively. The destination node $A$ is a \cbroot in both, and the red directed arrows show the \cbedges. 
The node $A$ is a \cbroot because it originates a route: \OMIT{due to its originated routes:} the routes at $A$ at $t=0$ and at all times thereafter belong to the \interfaces $Q_1(A)$ and $Q_2(A)$.

\begin{figure}[t!p]
    \centering
    \begin{subfigure}[b]{0.23\textwidth}
        \scalebox{0.8}{
        \begin{tikzpicture}
            \node[node1](A){$A$};
            \node[node2,above right=0.5cm and 1cm of A](B){$B$};
            \node[node2,below right=0.5cm and 1cm of A](C){$C$};
            \node[node2,below right=0.5cm and 1cm of B](E){$E$};
            \draw[-](A)--(B);
            \draw[-](A)--(C);
            \draw[-](B)--(E);
            \draw[-](C)--(E);
            \node[above left=0.1cm and -0.6cm of A]{\color{red}{\cbroot}};
            \draw[color=red,shorten >= 0.25cm,shorten <= 0.1cm,transform canvas={yshift=0.2cm},->](A)--(B);
            \draw[color=red,shorten >= 0.25cm,shorten <= 0.1cm,transform canvas={yshift=-0.2cm},->](B)--(A);
            \draw[color=red,shorten >= 0.1cm,shorten <= 0.25cm,transform canvas={yshift=0.2cm},->](B)--(E);
            \draw[color=red,shorten >= 0.1cm,shorten <= 0.25cm,transform canvas={yshift=-0.2cm},->](E)--(B);
            \draw[color=red,shorten >= 0.25cm,shorten <= 0.1cm,transform canvas={yshift=-0.2cm},->](A)--(C);
            \draw[color=red,shorten >= 0.25cm,shorten <= 0.1cm,transform canvas={yshift=0.2cm},->](C)--(A);
            \draw[color=red,shorten >= 0.1cm,shorten <= 0.25cm,transform canvas={yshift=-0.2cm},->](C)--(E);
            \draw[color=red,shorten >= 0.1cm,shorten <= 0.25cm,transform canvas={yshift=0.2cm},->](E)--(C);
        \end{tikzpicture}
        }
        \caption{\cbgraph for $Q_1$}
    \end{subfigure}
    ~
    \begin{subfigure}[b]{0.23\textwidth}
        \scalebox{0.8}{
        \begin{tikzpicture}
            \node[node1](A){$A$};
            \node[node2,above right=0.5cm and 1cm of A](B){$B$};
            \node[node2,below right=0.5cm and 1cm of A](C){$C$};
            \node[node2,below right=0.5cm and 1cm of B](E){$E$};
            \draw[-](A)--(B);
            \draw[-](A)--(C);
            \draw[-](B)--(E);
            \draw[-](C)--(E);
            \node[above left=0.1cm and -0.6cm of A]{\color{red}{\cbroot}};
            \draw[color=red,shorten >= 0.1cm,shorten <= 0.25cm,transform canvas={yshift=-0.2cm},->](A)--(B);
            \draw[color=red,shorten >= 0.25cm,shorten <= 0.1cm,transform canvas={yshift=-0.2cm},->](B)--(E);
            \draw[color=red,shorten >= 0.1cm,shorten <= 0.25cm,transform canvas={yshift=0.2cm},->](A)--(C);
        \end{tikzpicture}
        }
        \caption{\cbgraph for $Q_2$}
    \end{subfigure}
    \caption{\cbgraphs \cbgenerated for the running example.}
    \label{fig:example-cbgraphs}
\end{figure}

As an example of a \cbedge, consider the edge $BE$ in the network with the $I_1,Q_1$ \interfaces. If node $B$ has a route in $Q_1(B)$ (\IE{} if it has a route at all), then a route transfer from $B$ to $E$ will definitely result in $E$ having a route in $Q_1(E)$ \emph{after} the transfer. Critically, this is true regardless of which other messages may have reached $E$ earlier (i.e., for any state in $I_1(E)$ that node $E$ was in earlier). Thus, the edge $BE$ is a \cbedge. In fact, every directed edge in the network is a \cbedge for $I_1,Q_1$. This is because no routes are dropped along any edge; thus, transfer of a route along each edge necessarily establishes $Q_1$ at the receiving node.

As an example of an edge that is \emph{not} a \cbedge, consider the edge $CE$ for the $I_2,Q_2$ \interfaces.
Suppose node $C$ has a route in $Q_2(C)$, and transfers this route to node $E$. Suppose $E$ does not have any route before this transfer, 
perhaps due to delays along the $BE$ edge. 
Then, as a result of transfer from $C$, node $E$ will select this route where the path includes $C$. However, this route  \emph{does not belong to $Q_2(E)$}, as defined earlier (since its path includes $C$). Therefore, the $CE$ edge is not a \cbedge. 

\para{Connected \cbgraphs}
We say that a \cbgraph is \emph{connected} if every node $v$ in the network is itself a \cbroot or has a path of \cbedges from some \cbroot $w$ to $v$.
The following important theorem states that a \emph{connected \cbgraph} provides a basis for reasoning about convergence. 
(A proof sketch of the theorem and Lean formalizations appear in Appendix~\ref{appendix-subsec:proof}; a formal proof has been mechanized in Lean.\footnote{Our Lean proof is available at \url{https://github.com/dz7903/cbgraphs-formalization}.})
\OMIT{is done in Lean.}

\begin{theorem}[Connected \cbgraph Theorem]
    \label{thm:connected-cbgraph}
    If a \cbgraph for $Q$ is connected, then all nodes $v$ in the network abstractly converge to $Q(v)$. 
\end{theorem}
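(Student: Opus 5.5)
We argue by well-founded induction along the \cbgraph. Because the graph is connected, we can assign each node $v$ a depth $d(v)$: the length of a shortest path of \cbedges from some \cbroot to $v$, with $d(w)=0$ for every \cbroot $w$. Fix an arbitrary fair schedule $S=(\alpha,\beta)$. We show by strong induction on $d(v)$ that there is a time $\tau_v$ such that $\sigma_S(v,t)\in Q(v)$ for all $t\ge\tau_v$. This is exactly the statement that $v$ abstractly converges to $Q(v)$, and since $S$ is arbitrary the theorem follows.

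The base case is immediate. By the definition of a \cbroot, $\sigma_S(w,t)\in Q(w)$ for every $t\ge 0$, so we take $\tau_w=0$.

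For the inductive step, let $d(v)=k+1$ and choose a \cbedge $(u,v)$ with $d(u)=k$. The induction hypothesis gives $\tau_u$ with $\sigma_S(u,t)\in Q(u)$ for all $t\ge\tau_u$. We use fairness in two ways.
\begin{enumerate}
\item By fairness of the edge $(u,v)$, and because every message is sent at a time strictly before it is received, there is a time $\tau_v$ such that for every activation time $t\ge\tau_v$ of $v$ we have $\beta_{u,v}(t)\ge\tau_u$. Equivalently, after some point $v$ never again reads a stale message from $u$.
\item By fairness of activation, $v$ is activated at some time $t_0\ge\tau_v$.
\end{enumerate}
At every activation time $t\ge\tau_v$, node $v$ merges $f_{u,v}(\sigma_S(u,\beta_{u,v}(t)))$ with $\sigma_S(u,\beta_{u,v}(t))\in Q(u)$. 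The \cbedge property then yields $\sigma_S(v,t)\in Q(v)$. At non-activation times $t>t_0$, the route is copied from $t-1$, so membership in $Q(v)$ persists by a trivial induction on $t$. Hence $\sigma_S(v,t)\in Q(v)$ for all $t\ge t_0$.

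The main obstacle is pinning down the fairness assumption precisely enough for the first use above. Suppose fairness only says that up-to-date messages arrive infinitely often, while stale messages (with $\beta_{u,v}(t)<\tau_u$) may also keep being delivered infinitely often. Then after $v$ enters $Q(v)$, a stale message from $u$ could still be merged, and the \cbedge condition, which requires the sender's route to be in $Q(u)$, would say nothing about it. I would first try to show that the appendix's fairness definition (as in Daggitt et al.) implies $\beta_{u,v}(t)\to\infty$ as $t\to\infty$, which makes the first use above a direct consequence. If it only guarantees infinitely many fresh deliveries, I would strengthen the argument by exploiting the phrase ``regardless of which other messages reached $v$''. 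This requires reading the \cbedge check as quantifying over arbitrary routes from all other in-neighbours, together with $v$'s current route in $I(v)$. Stale messages from $u$ itself would then need the \invcheck to bound them, and I would fall back on the monotone-freshness form of fairness.

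A secondary point is that the \cbedge condition must be stated for the exact semantics. In particular, $v$'s new route must be in $Q(v)$ whenever the merged inputs include a $Q(u)$ route from $u$, with the other neighbours' routes constrained only by their invariants $I$. This requires invoking the invariant soundness result (that $\sigma_S(x,t)\in I(x)$ for all $x$ and $t$) at the activation step.
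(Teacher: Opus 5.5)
Your proof is correct and is essentially the paper's argument. The paper also inducts along the connected \cbgraph with the \cbroots as base case (Lemma~\ref{lemma:cbroots}), and for a \cbedge $(u,v)$ it picks an activation time of $v$ after which $\beta_{(u,v)}(t)$ never drops below $u$'s convergence time and then inducts on $t$, exactly as in Lemma~\ref{lemma:cbedges}.

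Your fairness worry is moot: the paper's non-failed-edge condition $\forall T.\ \exists T'\ge T.\ \forall t\ge T'.\ \beta_e(t)\ge T$ is literally $\beta_e(t)\to\infty$. Under mere infinitely-many fresh deliveries the theorem would actually fail, so no fallback could help. At the VC level, the $s_v$ in $\vcedge$ is instantiated not with $v$'s current route but with $\mathrm{init}(v)$ merged with the other in-neighbours' transferred routes, which lies in $I(v)$ by $\vcinit$, $\vcinv$ and Lemma~\ref{lemma:invariance}.
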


Note that we do not require a \cbgraph to be acyclic -- this provides flexibility while ensuring soundness.
Any connected CB-graph must have a connected acyclic subgraph that can serve as a witness for all nodes $v$ in the network to abstractly converge to their $Q(v)$.

More importantly, we note that \sysname does not require a user to provide a \cbgraph to establish abstract convergence; it automatically \cbgenerates a \emph{maximal} \cbgraph and checks that it is connected.  
More details are presented in \S\ref{subsec:alg1}, where \sysname maximally identifies each node as a \cbroot and each edge as a \cbedge, provided they meet certain requirements.

\subsection{\cbgraphs and Fault tolerance}
\label{subsec:fault-tolerance}

In this subsection, we consider edge failures and the correctness of properties under at most $k$ edge failures. An edge is considered failed if it does not deliver any messages sent after a certain time (a formal definition is shown in the Lean formalizations in Appendix~\ref{appendix-subsec:proof}ß).
We say that a \cbgraph is \emph{$k$-fail-connected}, if it remains connected regardless of which combination of $k$ \cbedges is removed.
\begin{theorem}[Fault tolerance theorem]
    \label{thm:fail-connected-cbgraph}
    If a \cbgraph for $Q$ is $k$-fail-connected, then for any schedule $S$ where at most $k$ edges are failed, all nodes $v$ in the network abstractly converge to $Q(v)$ and the network $N$ is \emph{$k$-fault tolerant}.
\end{theorem}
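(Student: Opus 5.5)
The plan is to reduce the fault-tolerance theorem to Theorem~\ref{thm:connected-cbgraph}, applied not to the full network but to the subgraph of \cbedges that survive. Fix a schedule $S$ in which at most $k$ edges have failed, and let $F\subseteq E$ be the set of failed edges, so $|F|\le k$. Let $H$ be the \cbgraph obtained by removing the \cbedges in $F$; this removes at most $k$ \cbedges. By $k$-fail-connectedness, $H$ is still connected: every node is a \cbroot or is reached from some \cbroot by a path of \cbedges none of which has failed.

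The core step is to check that the proof of Theorem~\ref{thm:connected-cbgraph} only uses fairness on the \cbedges of the connected witness, and on nodes, but never on the remaining edges. I would redo its induction for $S$ along a spanning structure of $H$ (a BFS tree from the \cbroots, which exists because $H$ is connected), inducting on the depth $d(v)$ of $v$ in this tree.

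\emph{Base case.} A \cbroot $w$ satisfies $\sigma_S(w,t)\in Q(w)$ for all $t$ and for \emph{any} schedule. This definition is schedule-agnostic, so it holds unchanged under failures.

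\emph{Inductive step.} Take a tree edge $(u,v)$. By induction there is $\tau_u$ with $\sigma_S(u,t)\in Q(u)$ for all $t\ge\tau_u$. Since $(u,v)\notin F$, it still delivers up-to-date messages infinitely often. So there is a time $\tau_v$ after which every activation of $v$ that uses a message from $u$ reads a route sent at a time $\beta_{u,v}(t)\ge\tau_u$. For this I need that eventually $\beta_{u,v}(t)\ge\tau_u$ for all later activations, i.e.\ no arbitrarily stale messages on a non-failed edge; this should come from the fairness definition in the appendix.

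The \cbedge property then says every such activation puts $v$ into $Q(v)$, whatever $v$'s other inputs are. Those other inputs may come from failed edges, whose stale values are still routes transferred by $f$, so the \cbedge check still applies.

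The main obstacle is the time between activations. Node $v$ may activate at times where the last message from $u$ is not ``fresh'', and $v$ may also keep its old state without activating. I expect the \cbedge definition to be quantified over arbitrary other incoming routes (those in $I$ of the neighbors), which handles inputs from failed edges. What it does not obviously give is that $v$'s state at an activation using a stale $u$-message stays in $Q(v)$. I would therefore choose $\tau_v$ as the first activation after which all $u$-messages used are sent after $\tau_u$. This is where the invariant $I$ and the precise fairness notion (node activation is still fair; only edges fail) are used. The argument mirrors the Lean proof of Theorem~\ref{thm:connected-cbgraph}, with fairness hypotheses restricted to edges outside $F$.

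Finally, since all nodes abstractly converge to $Q(v)$ under every such $S$, and the paper's correctness check gives $Q(v)\Rightarrow Y(v)$, the eventually-stable property holds under any $\le k$ edge failures. This is $k$-fault tolerance.
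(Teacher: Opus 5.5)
Your proposal is correct and follows the paper's proof. The paper inducts over connectivity of the \cbgraph with the failed edges removed. Its base case is the \cbroot property, which holds for any schedule. Its inductive step is the \cbedge lemma, which needs only that $v$ is activated infinitely often and that the single edge $(u,v)$ is non-failed; the other, possibly stale, inputs land in $I(v)$ via the invariance lemma, $\vcinit$ and $\vcinv$. Your choice of $\tau_v$ as the first activation of $v$ after $\beta_{(u,v)}$ stays at or above $\tau_u$, followed by induction on $t$, is exactly the paper's $t_2$ argument.
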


\begin{figure}[t]
    \centering
    \includegraphics[width=\linewidth]{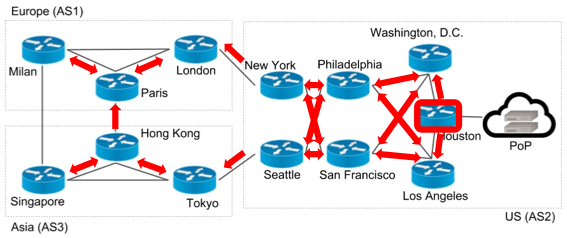}
    \caption{\cbgraph of a cross-world network from the Batfish tutorial~\cite{batfish-tutorials-failure}. Houston (bordered in red) is a \cbroot and directed arrows are \cbedges.}
    \label{fig:fault-tolerance-network}
\end{figure}

\para{Example for fault-tolerance} Consider Fig.~\ref{fig:fault-tolerance-network}, which shows a network with 13 routers and 3 ASs (taken from a Batfish tutorial~\cite{batfish-tutorials-failure}).
Suppose a user is interested in checking reachability to Houston, and provides $Q$ such that it includes any route except $\infty$. 
The \cbgenerated \cbgraph is also shown in the figure.
This \cbgraph does not tolerate any 1-edge failure, because the removal of the \cbedge from Seattle to Tokyo makes Hong Kong disconnected from the \cbroot Houston (note that the \cbedge from Hong Kong to Paris is one-directional).

\section{From Theory to Modular Tool: \sysname}
\label{sec:tool}
\label{subsec:verification-algorithm}

We now describe how we use our theory of abstract convergence to develop a modular verifier \sysname. 
\begin{figure*}[t]
    \centering
    \small
    \begin{tabular}{llll}
        \toprule
        VC name & VC Set & VC Formula (per node $v$, per edge $(u,v)$) & Explanation \\
        \midrule
        Initial &
        $\vcinit(v)$ &
        $\mathrm{init}(v)\in I(v)$ &
        \parbox[c]{6.5cm}{The \Iinterface $I(v)$ holds for the initial route of $v$.} \\
        \midrule
        Invariance &
        $\vcinv((u,v))$ &
        $\begin{aligned}
            &\forall s_u,s_v.\ s_u\in I(u)\land s_v\in I(v) \\
            &\quad\Rightarrow s_v\oplus f_{(u,v)}(s_u)\in I(v)
        \end{aligned}$ &
        \parbox[c]{6.5cm}{The \Iinterface $I(v)$ continues to hold when $v$ receives and merges a route from its neighbor $u$.} \\
        \midrule
        Property &
        $\vcprop(v)$ &
        $\forall s_v.\ s_v\in Q(v)\Rightarrow s_v\in Y(v)$ &
        \parbox[c]{6.5cm}{The \Qinterface $Q(v)$ implies the property $Y(v)$.} \\        
        \midrule
        \cbroot &
        $\vcroot(v)$ &
        $\begin{aligned}
            &\mathrm{init}(v)\in Q(v) \\
            &\land\bigwedge_{\substack{u\in V \\ (u,v)\in E}}\left(\begin{aligned}
                &\forall s_u,s_v.\ s_u\in I(u)\land s_v\in Q(v) \\
                &\quad\Rightarrow s_v\oplus f_{(u,v)}(s_u)\in Q(v)
            \end{aligned}\right)
        \end{aligned}$ &
        \parbox[c]{6.5cm}{For a \cbroot $v$, the \Qinterface $Q(v)$ holds for its initial route, and continues to hold when $v$ receives and merges a route from its neighbor $u$.} \\
        \midrule
        \cbedge &
        $\vcedge((u,v))$ &
        $\begin{aligned}
            &\forall s_u,s_v.\ s_u\in Q(u)\land s_v\in I(v) \\
            &\quad\Rightarrow s_v\oplus f_{(u,v)}(s_u)\in Q(v)
        \end{aligned}$ &
        \parbox[c]{6.5cm}{For a \cbedge $(u,v)$, if $u$ has converged to its \Qinterface $Q(u)$ and sends its converged route to $v$, $v$ must satisfy its $Q(v)$ after receiving and merging it.} \\
        \bottomrule
    \end{tabular}
    \caption{Summary: Verification Condition (VC) Formulas with Explanations.}
    \label{fig:VCs}
\end{figure*}

\subsection{\sysname Algorithm}
\label{subsec:alg1}

\sysname generates and checks Verification Conditions (VCs) on nodes and edges in the network. These are summarized in Fig.~\ref{fig:VCs}, where each row shows a different VC set based on their role (explained below), and the respective formula generated by \sysname.
The formulas capture conditions on arbitrary routes $s_u$ and $s_v$
at nodes $u$ and $v$, respectively.  The conditions are expressed in terms of the network model (\IE with operations $init, f, \oplus$ for the network instance), \interfaces ($Q, I$), and correctness property ($Y$) provided by the user. 
For each of these VCs, \sysname invokes an SMT solver (\EG Z3~\cite{z3}) to check its validity. If the check fails, the solver provides a counterexample, \IE values of $s_u, s_v$ where the formula is false. 

The first three VCs are regarded as \emph{essential} VCs, because if any of them fail, then \sysname immediately reports failure.
They serve the following roles:
\begin{itemize}
    \itemsep0em
    \item
    \textbf{Checking \Iinterface} $I$:
    $\vcinit$ and $\vcinv$ together ensure that the user-provided \interface $I$ is an inductive invariant at each node $v$. As shown, $\vcinit$ performs a base check on every node $v$, to ensure that the initial route $\mathrm{init}(v)$ is in $I(v)$.
    $\vcinv$ performs an inductive check for every edge $(u,v)$: 
    if a route $s_u$ is in $I(u)$,
    then the route $s_v$ selected by $v$ after its transfer-and-merge must also be in $I(v)$. 
    Here, the subformula $s_v\oplus f_{u,v}(s_u)$ represents the new route at node $v$ after the merge ($\oplus$) of its existing route $s_v$ and the transferred route ($f_{(u,v)}(s_u))$ from neighbor $u$.  
    \item \textbf{Checking correctness property} $Y$: $\vcprop$ ensures that every route in the
    \Qinterface
    $Q(v)$ satisfies $Y(v)$. This is needed for checking the correctness of the eventual-stability property.
\end{itemize}

\sysname requires that a user provide $Q, I$, but it alleviates some burden by constructing a \cbgraph for $Q$. 
It checks two VCs, $\vcroot$ and $\vcedge$, on all nodes and edges to identify \cbroots and \cbedges, respectively. 
\begin{itemize}
    \itemsep0em
    \item \textbf{Identifying \cbroots}: $\vcroot(v)$ checks that $\mathrm{init}(v)$ is in $Q(v)$, and for every incoming edge $(u,v)$ with $s_u$ as some arbitrary selected state at node $u$ (\IE $s_u \in I(u)$), the route selected by $v$ after transfer-and-merge must also belong to $Q(v)$. 
    If $\vcroot(v)$ is valid, then node $v$ is a \cbroot.  
    \item \textbf{Identifying \cbedges}: $\vcedge((u,v))$ checks that if $s_u$ is in $Q(u)$, and if $s_v$ is some arbitrary selected route at node $v$ (\IE $s_v \in I(v)$), then the route selected by $v$ after transfer-and-merge must belong to $Q(v)$. 
    If $\vcedge((u,v))$ is valid, then edge $(u, v)$ is a \cbedge.
\end{itemize}

\begin{figure}
\centering
\begin{algorithmic}[1]
  \REQUIRE Network $N$, Correctness Properties $Y$, and modular \interfaces $I, Q$
  \ENSURE Correct or Fail (with counterexamples)
  \STATE $\mathrm{CBRoots}\gets\emptyset,\mathrm{CBEdges}\gets\emptyset$
  \FOR{\label{alg:phase1-start} $v\gets V$ \textbf{parallel}}
    \IF {$\vcinit(v)\land\vcprop(v)$ is not valid}
      \RETURN Fail
    \ENDIF
    \IF {$\vcroot(v)$ is valid}
      \STATE $\mathrm{CBRoots}\gets\mathrm{CBRoots}\cup\{v\}$
    \ENDIF
  \ENDFOR
  \FOR{$e\gets E$ \textbf{parallel}}
    \IF {$\vcinv(e)$ is not valid}
      \RETURN Fail
    \ENDIF
    \IF {$\vcedge(e)$ is valid}
      \STATE $\mathrm{CBEdges}\gets\mathrm{CBEdges}\cup\{e\}$
    \ENDIF
  \ENDFOR
  \label{alg:phase1-end}
  \IF {\label{alg:phase2-start} $\mathrm{IsConnected}(\mathrm{CBRoots},\mathrm{CBEdges})$}
    \RETURN Correct
  \ELSE
    \RETURN Fail
  \ENDIF
  \label{alg:phase2-end}
\end{algorithmic}
\caption{\sysname Verification Algorithm.}
\label{alg:verification}
\end{figure}


\para{Putting it all together}
The complete verification algorithm including generation and checking of the VCs is shown (as pseudocode) in Fig.~\ref{alg:verification}.
It takes a network $N$, correctness property $Y$, and \interfaces $I, Q$ as inputs, and produces output Correct or Fail (with counterexamples).
\OMIT{It takes as inputs: a network $N$, correctness property $Y$, and \interfaces $I, Q$; and produces output Correct or Fail (with counterexamples). }

The algorithm has two phases. 
In phase~1 (lines \ref{alg:phase1-start}-\ref{alg:phase1-end}), \sysname generates and checks all VCs on the nodes and edges in parallel. 
If any of the essential VCs ($\vcinit, \vcinv, \vcprop$) is not valid, then \sysname reports Fail immediately, along with a counterexample for debugging purposes. 
The other two VCs ($\vcroot$, $\vcedge$) are not essential in the sense that \sysname does not report Fail immediately. 
If a VC in $\vcroot$ or $\vcedge$ is not valid, the associated node or edge is \emph{not} added to the corresponding set (in phase 1), and \sysname will save the counterexample for later use.

In phase~2 (line \ref{alg:phase2-start}-\ref{alg:phase2-end}), \sysname uses a standard breadth-first search (BFS) graph algorithm to check whether there exists a connected \cbgraph using the \cbroots and \cbedges identified in phase~1.
If \sysname finds a connected \cbgraph, it reports Correct; 
otherwise, it reports Fail (along with saved counterexamples for unconnected nodes).

\para{Correctness of \sysname}
The correctness (soundness) 
follows from Theorem~\ref{thm:connected-cbgraph} and the fact that
\OMIT{is based on Theorem~\ref{thm:connected-cbgraph} and that}
VCs in Fig.~\ref{fig:VCs} ensure soundness of $I, Q$ and identification of a \cbgraph.
We state the correctness theorem below (a proof sketch and the Lean formalized proof are shown in Appendix~\ref{appendix-subsec:proof}):
\begin{theorem}[Correctness Theorem]
    \label{thm:correctness}
    If the \sysname verification algorithm reports Correct for a network $N$, property $Y$, and \interfaces $I, Q$, then for any 
    schedule $S$, the network semantics $\sigma_S(v,t)$ satisfies $\forall v.\ \exists\tau.\ \forall t\ge\tau.\ \sigma_S(v,t)\in Y(v).$
\end{theorem}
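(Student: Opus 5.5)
The plan is to reduce the Correctness Theorem to Theorem~\ref{thm:connected-cbgraph}. The reduction has three steps. First, show that $I$ is a genuine invariant of the semantics. Second, show that the nodes and edges the algorithm accepts via $\vcroot$ and $\vcedge$ really satisfy the semantic definitions of \cbroots and \cbedges. Third, compose abstract convergence to $Q$ with $\vcprop$.

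\textbf{Step 1 (invariant).} Fix a fair schedule $S$. By strong induction on $t$, prove that $\sigma_S(v,t)\in I(v)$ for all $v$. The case $t=0$ is $\vcinit(v)$. If $t>0$ and $v\notin\alpha(t)$, the route is unchanged and the induction hypothesis applies. If $v\in\alpha(t)$, the new route is $\mathrm{init}(v)\oplus\bigoplus_{(u,v)\in E}f_{u,v}(\sigma_S(u,\beta_{u,v}(t)))$. Since $\beta_{u,v}(t)<t$, the induction hypothesis gives $\sigma_S(u,\beta_{u,v}(t))\in I(u)$. Because $\oplus$ is associative and commutative, I can fold the neighbours one at a time starting from the accumulator $\mathrm{init}(v)\in I(v)$, and each fold step preserves membership in $I(v)$ by $\vcinv((u,v))$. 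The neighbour set is finite, so this folding is a finite induction.

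\textbf{Step 2 (semantic CB-graph).} For a node $w$ with $\vcroot(w)$ valid, repeat the induction of Step 1 with $Q(w)$ as the accumulator predicate. The base case is $\mathrm{init}(w)\in Q(w)$. Each fold step uses the second conjunct of $\vcroot(w)$, whose hypothesis $s_u\in I(u)$ is supplied by Step 1. This shows $\sigma_S(w,t)\in Q(w)$ for all $t$.

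For an edge $(u,v)$ with $\vcedge((u,v))$ valid, suppose $v$ is activated at $t_2$ and $\sigma_S(u,\beta_{u,v}(t_2))\in Q(u)$. By commutativity and associativity, rewrite the merged route as $s_v\oplus f_{(u,v)}(s_u)$, where $s_u=\sigma_S(u,\beta_{u,v}(t_2))$ and $s_v$ is the merge of $\mathrm{init}(v)$ with all other neighbours' transferred routes. I then need $s_v\in I(v)$. This follows from the same folding argument as in Step 1 applied to the remaining neighbours, which is a slight strengthening of Step 1 stated for partial merges. Then $\vcedge$ yields $\sigma_S(v,t_2)\in Q(v)$. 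This is exactly the \cbedge definition.

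Phase~2 reported connectivity, so the semantic \cbgraph is connected, and Theorem~\ref{thm:connected-cbgraph} gives, for every $v$, a $\tau_v$ with $\sigma_S(v,t)\in Q(v)$ for all $t\ge\tau_v$.

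\textbf{Step 3.} For every $t\ge\tau_v$, $\vcprop(v)$ turns $\sigma_S(v,t)\in Q(v)$ into $\sigma_S(v,t)\in Y(v)$, which is the claim.

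\textbf{Main obstacle.} The main obstacle is Step 2 for edges, specifically isolating one neighbour's contribution inside the big merge. This requires the partial-merge lemma: every sub-merge of $\mathrm{init}(v)$ with any subset of the neighbours' routes lies in $I(v)$. I would prove it first as a standalone lemma by induction on the finite subset, relying on commutativity and associativity of $\oplus$. A second point to check is that the \cbedge notion used inside Theorem~\ref{thm:connected-cbgraph} matches this "at every activation receiving from $u$" formulation. If that theorem instead expects the condition only at activations receiving an up-to-date message, then the formulation above is stronger and still suffices.
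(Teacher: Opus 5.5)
Your proposal is correct and takes essentially the paper's route: the paper proves the Correctness Theorem (for fair schedules, as you assume) by applying Theorem~\ref{thm:connected-cbgraph} and then $\vcprop$, and Theorem~\ref{thm:connected-cbgraph} rests on the invariance, \cbroot and \cbedge lemmas that your Steps 1--2 reproduce, including the partial-merge fact $s_2\in I(v)$ you single out as the main obstacle. The only difference is presentational: you re-derive those lemmas and then invoke Theorem~\ref{thm:connected-cbgraph} in its semantic \cbgraph form, leaving the fairness argument (eventual flushing and activation) to it, whereas the paper's formal version of Theorem~\ref{thm:connected-cbgraph} already takes the VCs as hypotheses, so the final proof is a one-line combination.
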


\para{Fault tolerance analysis}
\sysname can support fault tolerance analysis simply by checking the $k$-fail-connectedness of the \cbgenerated  \cbgraph.
Algorithmically, \sysname's phase~2 check executes a variant of Dinitz's algorithm~\cite{Dinitz2006} to find the maximal value of $k$ such that the \cbgraph is $k$-fail-connected.
Theorem~\ref{thm:fail-connected-cbgraph} also applies to \sysname, ensuring the soundness of the fault tolerance analysis.

\para{Completeness of \sysname}
The theorem stated above guarantees that \sysname is sound.
We have shown (via a Lean formalization) that \sysname is also complete when all transfer functions $f$ are
\emph{strictly monotonic} (i.e., $x < y \Rightarrow f(x) < f(y)$, where $x < y$ is defined as $x \ne y$ and $x\oplus y=x$) and $<$ is a well-ordering relation on $R$.

However, \sysname is not complete in general for arbitrary routing protocols.
An example that shows incompleteness, inspired by Daggitt and Griffin's work~\cite{daggitt2018rate}, is presented in Appendix~\ref{appendix-subsec:completeness}ß.
Intuitively, \sysname is complete when the network has a “simple pattern of convergence”, where the routing tree (or an acyclic graph) in the converged state can be used as a \cbgraph.
Our evaluations (\S\ref{sec:eval}) show that \sysname can successfully verify many networks in practice, in which all networks have such a pattern of convergence.

\subsection{Annotation Design and Debugging Methodology}
To use \sysname in practice, a user needs to provide suitable $I$ and $Q$ that can be proven sound, and $Q(v)$ must be sufficient to establish the desired property $Y(v)$.
To specify such $I$ and $Q$, a user typically starts from the property $Y$ and tries specifying $Q$ as a subset of $Y$, \emph{and} also a superset of the converged routes.
This can often be expressed as a conjunction of $Y$ with some strengthening clauses. After finding $Q$, the user can try specifying $I$ as a superset of all possible routes selected at any time -- typically expressed as a disjunction of $Q$ with some "pre-convergence" routes. \footnote{\sysname does not enforce $Q$ to be a subset of $I$, but specifying an $I$ that over-approximates $Q$ is more intuitive.}

Starting from an initial set of annotations, which may be incorrect or inadequate, the user can use an iterative debugging process using \sysname.
When verification fails, the counterexamples from failed VCs provide feedback for user-assisted error debugging.
Thanks to the modularity of \sysname, each VC violation is \emph{localized}, i.e., it is caused by the configurations or specifications of a certain node or edge. 
For example, when $\vcedge$ fails, the SMT solver will generate concrete routes $s_u, s_v$ such that $s_u\in Q(u)\land s_v\in I(v)$ but $s_v\oplus f_{(u,v)}(s_u) = s'_v \notin Q(v)$. Based on their beliefs (or knowledge) about $s_u, s_v, s_v'$, a user can strengthen $Q(u), I(v)$ to exclude $s_u, s_v$, or weaken $Q(v)$ to include $s'_v$.
In Appendix~\ref{appendix-subsec:debug} and Appendix~\ref{appendix-subsec:debug-example}, we describe a user-assisted debugging process with all cases and illustrate it with an example.
It is also possible that the transfer function $f_{(u,v)}$ is erroneous due to a bug in the network configuration and needs repair --- we leave automated repair to future work.

\section{Synthesis of Component Specifications}

\label{sec:synth}

In this section, we consider a different variation of the verification problem, where a user may have an idea of a \cbgraph, but would like to automatically synthesize the \interfaces $I, Q$, in order to prove the property $Y$ of interest. 
This is often the case in practice for highly structured networks such as those found in data centers: for instance, fat-tree topologies~\cite{al2008scalable} are configured to use routes along well-defined valley-free paths.
In such structured topologies and policies, a \cbgraph often corresponds to a BFS (breadth-first search) graph from the destination. 

Given a \cbgraph, our approach for synthesis of the \interfaces is based on 
Constrained Horn Clauses (CHC). 
CHC formulations are widely used in program verification and invariant synthesis~\cite{ChcPldi12,spacer,ChcFlc15}. We follow their notation in definitions shown below.
\begin{definition}
A constrained Horn clause (CHC) rule has one of the following two forms:
\begin{gather*}
    \forall x_1,\cdots,x_n.\ \phi(\vec x)\land R_1(\vec x)\land\cdots R_n(\vec x)\Rightarrow R_{n+1}(\vec x) \\
    \forall x_1,\cdots,x_n.\ \phi(\vec x)\land R_1(\vec x)\land\cdots R_n(\vec x)\Rightarrow\mathrm{False}
\end{gather*}
where $\phi$ is an interpreted formula (over some first-order theory) and each $R_i$ is an uninterpreted predicate. Note that the variables in each CHC rule ($x_1,\cdots,x_n$) are universally quantified --- these quantifiers are implicit when not shown.
\end{definition}

\begin{figure*}[t]
    \centering
    \begin{tabular}{rl}
        Interpreted: & $Y_v:R\to\mathbb B$, $\mathrm{init}_v\in R$, $f_{(u,v)}:R\to R$ \\
        Uninterpreted: & $I_v,Q_v:R\to\mathbb B$ \\
        \multicolumn{2}{c}{
            \begin{tabular}{cll}
                \toprule
                VC & For each & CHC rule \\
                \midrule
                $\vcinit(v)$ & $v\in V$ & $s_v=\mathrm{init}_v\Rightarrow I_v(s_v)$ \\
                \midrule
                $\vcprop(v)$ & $v\in V$ & $\neg Y_v(s_v)\land Q_v(s_v)\Rightarrow\mathrm{False}$ \\
                \midrule
                $\vcinv((u,v))$ & $(u,v)\in E$ & $s_v'=s_v\oplus f_{(u,v)}(s_u)\land I_u(s_u)\land I_v(s_v)\Rightarrow I_v(s_v')$ \\
                \midrule
                \multirowcell{2}{$\vcroot$(v)} & $v\in\mathrm{CBRoot}$ & $s_v=\mathrm{init}_v\Rightarrow Q_v(s_v)$ \\
                \cmidrule{2-3}
                & $v\in\mathrm{CBRoot}$ and $(u,v)\in E$
                & $s_v'=s_v\oplus f_{(u,v)}(s_u)\land I_u(s_u)\land Q_v(s_v)\Rightarrow Q_v(s_v')$ \\
                \midrule
                $\vcedge(u,v)$ & $(u,v)\in\mathrm{CBEdge}$ & $s_v'=s_v\oplus f_{(u,v)}(s_u)\land Q_u(s_u)\land I_v(s_v)\Rightarrow Q_v(s_v')$ \\
                \bottomrule
            \end{tabular}
        }
    \end{tabular}
    \caption{CHC system for the network component specification synthesis problem, given the \cbgraph. $\mathbb B=\{\mathrm{True},\mathrm{False}\}$ is the Boolean domain, variables $s_u,s_v,s_v'$ in CHC rules are universally quantified implicitly.}
    \label{fig:chc}
\end{figure*}

A CHC system is a conjunction of CHC rules. A solution of a CHC system is an interpretation of predicates $R_i$ (typically in the form of formulas in some first-order theory) such that all CHC rules are valid. 
A CHC solver (such as Spacer/Z3~\cite{spacer-2016}) searches for a solution; a CHC system is called infeasible if no such solution exists. 

\para{Formulating \interface synthesis in CHC}
We formulate CHC rules based on the VCs we showed earlier, but where $I_v, Q_v$ are now uninterpreted predicates to be solved, rather than interpreted predicates provided by a user. 
The \cbgraph ($\mathrm{CBRoot}\subseteq V$, $\mathrm{CBEdge}\subseteq E$) is provided by the user, property $Y_v$ and network logic ($\mathrm{init}_v$, $f_{(u,v)}$) are interpreted (based on the network configurations). 

The complete CHC system for the network \interface synthesis problem is shown in Fig.~\ref{fig:chc}. 
For each VC set in Fig.~\ref{fig:VCs}, we formulate a CHC rule for each member of the set (a node, edge, \cbroot or \cbedge).
Note that $\vcroot$ results in two CHC rules to conform to the definition of CHCs. 

\para{Prototype tool \synthname}
We have implemented \synthname based on the above CHC formulation.
Our tool uses BFS traversal to check that the user-provided \cbgraph is connected.
If a CHC solution exists, it corresponds to an interpretation of the $I_v, Q_v$ predicates that make all CHC rules valid, which establishes successful verification due to all VCs passing. 
An off-the-shelf CHC solver typically works with all CHC rules together in a non-modular way; we leave modular synthesis to future work.

\section{Implementation and Evaluation}
\label{sec:eval}

We have developed a prototype implementation of \sysname,
using Batfish~\cite{batfish} to extract the network model from Cisco and Juniper configurations.
The tool is implemented in C\#, using Microsoft's Zen library~\cite{zenlib} and SMT solver Z3~\cite{z3}. 
The \synthname prototype translates the CHC rules (Fig.~\ref{fig:chc}) into SMTLib format~\cite{smtlib} and uses Spacer~\cite{spacer,spacer-2016} (version 4.13.0, part of Z3) to solve the CHC problem.

Since Spacer does not support the theory of sequences (needed for modeling $aspath$), we used a slightly simplified BGP model for the \synthname evaluations.
In this simplified model, we replace the field of AS-paths with an abstraction (its length) and encode communities as a boolean array (assuming the communities are among a few pre-determined values).
While \synthname is less expressive than \sysname, our evaluations are not affected since the transfer function and property involve only simplified fields in all benchmark instances.

\para{Evaluation Goals}
We seek to answer the following questions.
\begin{itemize}
\itemsep0em
\item \textbf{RQ1: Expressiveness.} Is \sysname able to verify a variety of properties for a diverse collection of networks, including data center fat-trees~\cite{al2008scalable} as well as other real-world topologies and policies?
\item \textbf{RQ2: Performance of verification.} 
How well does \sysname perform as a verifier? 
Does it scale on large networks and complex configurations?
\item \textbf{RQ3: Usability of fault tolerance analysis.} Is \sysname able to perform a push-button fault tolerance analysis based on \cbgraph? What information can users gain from this analysis?
\item \textbf{RQ4: Performance of \interface synthesis.} How well does \synthname perform in practice?
\end{itemize}

\para{Comparison of \sysname with other tools}
For a baseline comparison, we have implemented a monolithic SMT-based verifier (MS) based on the algorithm in Minesweeper~\cite{minesweeper}.
We also compare \sysname with Timepiece~\cite{timepiece}, using an artifact~\cite{timepiece,timepiece-artifact} (with their user-provided \interfaces). 
Lightyear tool~\cite{lightyear} is not available (confirmed by the authors).
We note that since the inputs (properties and annotations) of \sysname are different from those in  other tools (MS, Timepiece), and they require different levels of user effort, it is not possible to perform a fair apples-to-apples comparison.
However, our goal is to focus on the same properties (such as reachability) that are of interest to network operators.
We report on the success and runtimes of different tools to provide information about their efficacy and practicality, which can be viewed as more indicative of their capabilities rather than a quantitative assessment.

\subsection{Benchmarks and Setup}
We worked with three sets of benchmarks, described below.\footnote{All benchmarks are available at \url{https://github.com/dz7903/cbgraphs-benchmarks}. An artifact of our prototype and benchmarks is available at \url{https://doi.org/10.5281/zenodo.22103905}.}

\para{(1) Synthetic fattrees} We use parameterized fattree networks (20--2000 nodes) with auto-generated configurations in Cisco's format (via Batfish frontend~\cite{batfish})
(with about 50-300 lines of code per node).
We verify the following four properties often used in datacenters:
\begin{itemize}
    \itemsep0em
    \item \bench{Reachability}: all nodes must have reachability to a fixed destination;
    \item \bench{PathLength}: the selected route at all nodes (to a fixed destination) must be the shortest path;
    \item \bench{ValleyFree}: the selected route at all nodes (to a fixed destination) must not be a valley (up-down-up);
    \item \bench{Hijack}: certain routes from an external "hijacker" node must be blocked from the internal network.
\end{itemize}
We remark that Timepiece~\cite{timepiece} used the same properties, and Lightyear does not support the \bench{PathLength} and \bench{ValleyFree} properties.
Additional details with full specifications used in \sysname are provided in Appendix~\ref{appendix-subsec:benchmark}.

\para{(2) Internet2} 
The Internet2 benchmark~\cite{internet2} is a real wide-area network with over 100,000 lines of Juniper configurations. It has 293 nodes (10 internal routers, 283 external neighbors).\footnote{
This is different from the reported number of 253 external neighbors in Timepiece's artifact~\cite{timepiece-artifact}, due to different logic in the frontend 
for parsing the configurations.}
We use a snapshot of Internet2's configurations~\cite{bagpipe} that have been reported to violate some properties.
\footnote{Some features used by Internet2, \EG IPv6, next-hop self, are not modeled. These do not affect our evaluation.}
We consider the following three properties:
\begin{itemize}
    \itemsep0em
    \item \bench{BlockToExternal} property (presented in Bagpipe~\cite{bagpipe}): check that internal routes with a BTE community are not advertised to Internet2's peers. 
    \item \bench{NoMartian} property (presented in Bagpipe~\cite{bagpipe}): check that internal routers should not select routes with external Martian prefixes as destination.
    \item \bench{InternalReachability} property: check that internal nodes can advertise themselves to other internal nodes.
\end{itemize}

\para{Batfish tutorial networks} We consider two illustrative examples from the Batfish tutorials~\cite{batfish-tutorials-bgp,batfish-tutorials-failure}. The first (for a school) has 13 routers running BGP. The second (shown in Fig.~\ref{fig:fault-tolerance-network}) has 13 routers distributed across Asia, Europe, and the US. 
We verify reachability and AS path length (i.e., AS paths are of length 2 or less) properties in the first example, and reachability to Houston in the second example.

\para{Experimental setup} 
All experiments were run on a Linux server with 16 Intel(R) Xeon(R) CPUs and 252 GB memory. We ran the \sysname and Timepiece modular verifiers in parallel using all 16 CPUs, and run the monolithic verifier (which is not easily parallelizable) on 1 CPU.

\subsection{Evaluation Results for \sysname}

\begin{figure}[t]
    \centering
    \begin{tikzpicture}
        \begin{axis}[
            scale=0.22,
            at={(0,0)},
            title={(a) \bench{Reachability}},
            title style={yshift=-3.5cm},
            xlabel={\#Nodes},
            ylabel={Verification time (s)},
            xmin=0, xmax=2000,
            ymin=0.1, ymax=86400,
            xtick={0,1000,2000},
            ymode=log,
            grid style=dashed,
        ]
        \addplot[color=blue,mark=square]
            coordinates {
            (20,0.348)(80,0.969)(180,2.822)(320,7.647)(500,19.269)(720,45.365)(980,100.438)(1280,197.081)(1620,389.699)(2000,774.842)
            };
        \addplot[color=green,mark=square]
            coordinates {
            (20,0.203)(80,0.292)(180,0.591)(320,1.328)(500,2.345)(720,3.888)(980,6.080)(1280,8.964)(1620,12.856)(2000,20.175)
            };
        \addplot[color=red, only marks, mark=o]
            coordinates {
            (20,0.329)(80,4.813)(180,38.808)(320,710.890)(500,4872.436)(720,9541.630)(980,34585.131)
            };
        \addplot[color=red, only marks, mark=x]
            coordinates {
            (1280,86400)(1620,86400)(2000,86400)
            };
        \addplot[color=red]
            coordinates {
            (20,0.329)(80,4.813)(180,38.808)(320,710.890)(500,4872.436)(720,9541.630)(980,34585.131)
            (1280,86400)(1620,86400)(2000,86400)
            };
        \end{axis}
        \begin{axis}[
            scale=0.22,
            at={(4cm,0)},
            title={(b) \bench{PathLength}},
            title style={yshift=-3.5cm},
            xlabel={\#Nodes},
            ylabel={Verification time (s)},
            xmin=0, xmax=2000,
            ymin=0.1, ymax=86400,
            xtick={0,1000,2000},
            ymode=log,
            grid style=dashed,
            legend style={nodes={scale=0.9, transform shape},at={(1.1,2)},legend columns=-1},
        ]
        \addplot[color=blue,mark=square]
            coordinates {
            (20,0.329)(80,1.203)(180,3.605)(320,8.721)(500,21.241)(720,47.207)(980,103.592)(1280,199.915)(1620,365.722)(2000,585.614)
            };
        \addplot[color=green,mark=square]
            coordinates {
            (20,0.228)(80,0.712)(180,3.021)(320,9.340)(500,24.679)(720,45.879)(980,111.903)(1280,183.326)(1620,315.450)(2000,497.188)
            };
        \addplot[color=red,mark=o]
            coordinates {
            (20,0.440)(80,11.395)(180,189.179)(320,1318.315)(500,6652.372)(720,21846.499)
            };
        \addplot[color=red,mark=x]
            coordinates {
            (980,86400)(1280,86400)(1620,86400)(2000,86400)
            };
        \addplot[color=red]
            coordinates {
            (20,0.440)(80,11.395)(180,189.179)(320,1318.315)(500,6652.372)(720,21846.499)
            (980,86400)(1280,86400)(1620,86400)(2000,86400)
            };
        \legend{\sysname,Timepiece,MS,Timeout}
        \end{axis}
        \begin{axis}[
            scale=0.22,
            at={(0,-4cm)},
            title={(c) \bench{ValleyFree}},
            title style={yshift=-3.5cm},
            xlabel={\#Nodes},
            ylabel={Verification time (s)},
            xmin=0, xmax=2000,
            ymin=0.1, ymax=86400,
            xtick={0,1000,2000},
            ymode=log,
            grid style=dashed,
        ]
        \addplot[color=blue,mark=square]
            coordinates {
            (20,0.478)(80,2.069)(180,6.023)(320,13.779)(500,25.537)(720,47.092)(980,73.308)(1280,120.509)(1620,207.695)(2000,393.430)
            };
        \addplot[color=green,mark=square]
            coordinates {
            (20,0.233)(80,0.987)(180,5.518)(320,19.764)(500,53.976)(720,122.048)(980,283.635)(1280,474.987)(1620,754.306)(2000,1165.468)
            };
        \addplot[color=red,only marks,mark=o]
            coordinates {
            (20,0.720)(80,342.200)(180,1309.208)(320,17657.073)
            };
        \addplot[color=red,only marks,mark=x]
            coordinates {
            (500,86400)(720,86400)(980,86400)(1280,86400)(1620,86400)(2000,86400)
            };
        \addplot[color=red]
            coordinates {
            (20,0.720)(80,342.200)(180,1309.208)(320,17657.073)
            (500,86400)(720,86400)(980,86400)(1280,86400)(1620,86400)(2000,86400)
            };
        \end{axis}
        
        \begin{axis}[
            scale=0.22,
            at={(4cm,-4cm)},
            title={(d) \bench{Hijack}},
            title style={yshift=-3.5cm},
            xlabel={\#Nodes},
            ylabel={Verification time (s)},
            xmin=0, xmax=2000,
            ymin=0.1, ymax=86400,
            xtick={0,1000,2000},
            ymode=log,
            grid style=dashed,
        ]
        \addplot[color=blue,mark=square]
            coordinates {
            (20,0.534)(80,2.479)(180,7.033)(320,15.863)(500,30.053)(720,54.145)(980,84.860)(1280,123.536)(1620,176.160)(2000,246.162)
            };
        \addplot[color=green,mark=square]
            coordinates {
            (20,0.217)(80,0.828)(180,2.200)(320,7.386)(500,13.301)(720,33.194)(980,47.209)(1280,74.829)(1620,110.170)(2000,171.972)
            };
        \addplot[color=red,mark=o]
            coordinates {
            (20,23.694)(80,1577.377)
            };
        \addplot[color=red,mark=x]
            coordinates {
            (180,86400)(320,86400)(500,86400)(720,86400)(980,86400)(1280,86400)(1620,86400)(2000,86400)
            };
        \addplot[color=red]
            coordinates {
            (80,1577.377)(180,86400)
            };
        \end{axis}
    \end{tikzpicture}
    \caption{Evaluation results for \sysname on fattree networks, compared with Timepiece (a modular verifier) and MS (a Minesweeper-style monolithic verifier).}
    \label{fig:verification-performance-fattree}
\end{figure}

\para{Fattree networks} 
\label{subsec:eval-verif}
The evaluation results (Fig.~\ref{fig:verification-performance-fattree}) show the performance of \sysname, Timepiece, and the baseline monolithic verifier (MS).
Each graph plots the number of nodes in the network ($x$-axis) against the wall-clock time (in seconds, in log scale on the $y$-axis), with a timeout of 24 hours.

Note first that \sysname successfully verifies all properties on all networks. This answers $RQ1$ positively: \sysname is able to express and verify these properties deemed useful by network operators.
Regarding $RQ2$, these graphs show that \sysname is able to verify all properties on a 2000-node fattree networks within 20 minutes, whereas the monolithic verifier (MS) has many timeouts, e.g., it fails to verify the \bench{Hijack} property on a 180-node fattree.
The performance of \sysname is generally comparable with Timepiece, marginally better for the \bench{ValleyFree} benchmark, but worse in the \bench{Reachability} benchmark (where the property is simple and the \cbgraph checks in \sysname are an overhead). 

\para{Internet2}
The evaluation results of \sysname on Internet2 network are shown in Fig.~\ref{fig:eval-internet2}, compared with Timepiece \footnote{The Timepiece artifact~\cite{timepiece-artifact} does not include \bench{\footnotesize NoMartians} and \bench{\footnotesize InternalReachability} benchmarks, so we evaluate the version published by Alberdingk Thijm in GitHub~\cite{timepiece-github}.} and MS (Minesweeper).
\sysname reports 3 counterexamples (cex) for \bench{BlockToExternal} and 1 counterexample for \bench{NoMartians}.
We manually inspected the configurations and found that the reported counterexamples violate the invariants according to our BGP model.
However, we note that these may not be actual bugs under a real-world deployment with additional environment constraints.
Timepiece does not report these violations (since it excludes certain locations in its specifications).
It completes verification successfully, taking much longer for \bench{InternalReachability}.
MS is unable to complete any benchmark on Internet2.

\begin{figure}[t]
    \centering
    \small{
    \begin{tabular}{cccc}
        \toprule
        Property & \sysname & Timepiece & MS \\
        \midrule
        \bench{BlockToExternal}
        & 3 cex, 22s
        & 42.4m
        & OoM \\
        \midrule
        \bench{NoMartians}
        & 1 cex, 121s
        & 11.1m
        & OoM \\
        \midrule
        \bench{InternalReachability}
        & 74s
        & 11.4m
        & TO \\
        \bottomrule
    \end{tabular}
    \caption{Evaluation results on Internet2 network.
    ``s'' is seconds, ``m'' is minutes, cex indicates counterexample, OoM indicates Out-of-Memory, TO indicates timeout after 24 hours.}
    \label{fig:eval-internet2}
    }
\end{figure}

\para{Batfish tutorial networks}
\sysname takes 0.4s to verify each of two properties (reachability, AS path length) in the school network.
It takes 0.5s to verify the reachability property in the cross-world network. Since MS has limitations in handling IGP, it gave a counterexample (which is spurious).

\para{Fault tolerance analysis}
To answer RQ3, we run the fault tolerance analysis on each benchmark that passes the validity checks.
For the largest 2000-node fattree network, \sysname takes less than 20 seconds to check \cbgraph connectivity in each benchmark. The network can tolerate 19-edge failures for the properties between top-of-rack routers.

In the Internet2 benchmark, \sysname takes 23 ms (milliseconds) for fault tolerance analysis in \bench{InternalReachability}, and finds the network tolerant to 8-edge failures. In the Batfish tutorial benchmarks, \sysname finds that the school network can tolerate 1-edge failure for reachability, but does not tolerate any edge failure for the AS path length property.
In the second (cross-world) network, \sysname finds that the routers in Asia do not tolerate any edge failure. (as discussed in \S\ref{subsec:fault-tolerance}). Each failure analysis takes 8 ms.

\begin{figure}
    \centering
    \small{
    \begin{tabular}{ccc}
        Benchmark & Lines of Code (in C\#) \\
        \toprule
        \bench{Reachability} & 22 (12 for common definitions) \\
        \midrule
        \bench{PathLength} & 32 (12 for common definitions) \\
        \midrule
        \bench{ValleyFree} & 48 (13 for common definitions) \\
        \midrule
        \bench{Hijack} & 44 (21 for common definitions) \\
        \midrule
        \bench{BlockToExternal} & 49 (18 for common definitions) \\
        \midrule
        \bench{NoMartians} & 61 (30 for common definitions) \\
        \midrule
        \bench{InternalReachability} & 37 (17 for common definitions) \\
        \bottomrule
    \end{tabular}
    }
    \caption{User effort in writing specifications in \sysname.}
    \label{fig:loc}
\end{figure}

\para{User effort}
Figure~\ref{fig:loc} reports the user effort in providing specifications in \sysname, in terms of lines of code (LoC) in C\# (via the Zen library~\cite{zenlib}), with many common definitions (as shown). These include properties $Y$ and annotations $I, Q$ for all nodes and their initial routes. (The tool library provides the BGP model and transfer function encoding into SMT.)
Different sizes of the fattree networks (20 nodes to 2000 nodes) require the same user effort in LoC, because specifications in our prototype can be written as parameterized C\# functions that are reused for different sizes.

In terms of effort for refining the annotations, we needed 2-3 iterations to find the correct annotations ($I, Q$) for \bench{ValleyFree}, \bench{Hijack} and \bench{BlockToExternal} benchmarks. We did not need any refinement for \bench{Reachability}, \bench{PathLength}, \bench{NoMartians}, and \bench{InternalReachability} benchmarks since their annotations are relatively simple.

\subsection{Results for Synthesis of Component Specifications}

To answer \textbf{RQ4}, we perform experiments on the synthetic fattree benchmarks, with the four properties described earlier. The \cbgraph we provided is in BFS order from the destination. 
Fig.~\ref{fig:synth} shows the evaluation results, with each graph plotting the number of nodes ($x$-axis) and the wall-clock time in seconds (in log scale, $y$-axis) with a timeout of 1 hour.
We compare \synthname with MS (Minesweeper) because it does not require user-provided \interfaces to perform verification for the whole network.
(The performance of MS here is better than in Fig.~\ref{fig:verification-performance-fattree} due to a simplified BGP model, as mentioned earlier.)

In almost all experiments (except one), \synthname is faster than monolithic verification in MS. 
Among improvements, \synthname is 22-27x faster for checking \bench{Reachability} (best case), and is 1.3x faster for checking \bench{ValleyFree} (least improved).
(The fluctuations in \bench{PathLength} and \bench{Hijack} were reproduced multiple times in our experiments, likely due to solver heuristics.)
These results are encouraging for \interface synthesis: they show that even with an out-of-the-box CHC solver (not necessarily modular), \synthname shows an improvement against monolithic verification.

\begin{figure}[t]
    \centering
    \begin{tikzpicture}
        \begin{axis}[
            scale=0.22,
            at={(0,0)},
            title={(a) \bench{Reachability}},
            title style={yshift=-3.2cm},
            xlabel={\#Nodes},
            ylabel={Verification time (s)},
            xmin=0, xmax=2000,
            ymin=0.01, ymax=1000,
            xtick={0,1000,2000},
            ymode=log,
            grid style=dashed,
        ]
        \addplot[color=blue,mark=square]
            coordinates {
            (20,0.034)(80,0.051)(180,0.092)(320,0.182)(500,0.333)(720,0.596)(980,0.868)(1280,1.485)(1620,1.937)(2000,2.907)
            };
        \addplot[color=red, mark=o]
            coordinates {
            (20,0.162)(80,0.659)(180,2.261)(320,4.543)(500,8.649)(720,14.664)(980,23.932)(1280,33.411)(1620,47.388)(2000,65.810)
            };
        \end{axis}
        \begin{axis}[
            scale=0.22,
            at={(3cm,0)},
            title={(b) \bench{PathLength}},
            title style={yshift=-3.2cm},
            xlabel={\#Nodes},
            yticklabel=\empty,
            xmin=0, xmax=2000,
            ymin=0.01, ymax=3600,
            xtick={0,1000,2000},
            ymode=log,
            grid style=dashed,
            legend style={nodes={scale=0.9, transform shape},at={(1,1.5)},legend columns=-1},
        ]
        \addplot[color=blue,mark=square]
            coordinates {
            (20,0.050)(80,0.273)(180,0.815)(320,12.422)(500,84.474)(720,280.468)(980,37.642)(1280,2772.808)
            };
        \addplot[color=red, mark=o]
            coordinates {
            (20,0.364)(80,3.766)(180,19.774)(320,90.459)(500,297.609)(720,793.607)(980,2756.459)
            };
        \addplot[color=red, mark=x]
            coordinates {
            (1280,3600)(1620,3600)(2000,3600)
            };
        \addplot[color=blue,mark=x]
            coordinates {
            (1620,3600)(2000,3600)
            };
        \addplot[color=blue]
            coordinates {
            (1280,2772.808)(1620,3600)
            };
        \addplot[color=red]
            coordinates {
            (980,2756.459)(1280,3600)
            };
        \legend{\synthname,MS,Timeout}
        \end{axis}
        \begin{axis}[
            scale=0.22,
            at={(0,-3.2cm)},
            title={(c) \bench{ValleyFree}},
            title style={yshift=-3.2cm},
            xlabel={\#Nodes},
            ylabel={Verification time (s)},
            xmin=0, xmax=2000,
            ymin=0.01, ymax=3600,
            xtick={0,1000,2000},
            ymode=log,
            grid style=dashed,
        ]
        \addplot[color=blue,mark=square]
            coordinates {
            (20,0.060)(80,1.186)(180,14.503)(320,217.973)(500,859.388)
            };
        \addplot[color=blue,mark=x]
            coordinates {
            (720,3600)(980,3600)(1280,3600)(1620,3600)(2000,3600)
            };
        \addplot[color=red, mark=o]
            coordinates {
            (20,0.611)(80,12.747)(180,88.341)(320, 367.864)(500,1153.007)(720,3056.290)
            };
        \addplot[color=red,mark=x]
            coordinates {
            (980,3600)(1280,3600)(1620,3600)(2000,3600)
            };
        \addplot[color=blue]
            coordinates {
            (500,859.388)(720,3600)
            };
        \addplot[color=red]
            coordinates {
            (720,3056.290)(980,3600)
            };
        \end{axis}
        \begin{axis}[
            scale=0.22,
            at={(3cm,-3.2cm)},
            title={(d) \bench{Hijack}},
            title style={yshift=-3.2cm},
            xlabel={\#Nodes},
            yticklabel=\empty,
            xmin=0, xmax=2000,
            ymin=0.01, ymax=3600,
            xtick={0,1000,2000},
            ymode=log,
            grid style=dashed,
        ]
        \addplot[color=blue,mark=square]
            coordinates {
            (20,0.067)(80,0.166)(180,1.215)(320,2.413)(500,10.298)(720,59.387)(980,59.116)(1280,120.520)(1620,1483.435)(2000,141.669)
            };
        \addplot[color=red, mark=o]
            coordinates {
            (20,0.187)(80,1.435)(180,8.819)(320,72.854)(500,269.397)(720,1030.978)(980,2491.323)
            };
        \addplot[color=red,mark=x]
            coordinates {
            (1280,3600)(1620,3600)(2000,3600)
            };
        \addplot[color=red]
            coordinates {
            (980,2491.323)(1280,3600)
            };
        \end{axis}
    \end{tikzpicture}

    \caption{Results for automated \interface synthesis (\synthname), compared with a monolithic verifier (MS).}
    \label{fig:synth}
\end{figure}
\section{Related Work}
\label{sec:related}

\para{Modular Network Verification}
\sysname improves upon related modular verifiers~\cite{rcdc,kirigami,timepiece,lightyear} in various aspects. 
RCDC~\cite{rcdc} was a pioneer in modular network verification, but was customized for checking invariants of Azure data centers, not for other settings. 
Kirigami~\cite{kirigami} requires a user to exactly identify the concrete converged state of a network, a difficult task requiring exact knowledge of network internals.
Lightyear~\cite{lightyear} is able to verify safety and liveness properties, but not those that involve path lengths or path preferences. It requires users to provide invariants (for safety) and a specific path and path constraints (for liveness), thereby placing a higher burden on users than in \sysname, which automatically generates a \cbgraph.
Timepiece~\cite{timepiece} requires the user to provide component specifications as functions that describe how each component’s routing state changes over all time instants. While it can support more expressive properties (also expressed as functions over time) than \sysname, this comes at the cost of higher user burden. Furthermore, Timepiece considers only a synchronous schedule, whereas \sysname is proven sound under asynchronous network semantics.

\para{Control plane verification}
Our work is more broadly related to verification of network control plane configurations~\cite{bagpipe,minesweeper,arc,abhashkumar2020tiramisu,prabhu2020plankton,ye2020accuracy,beckett2018control,beckett2019abstract,acorn,expresso}.
Bagpipe~\cite{bagpipe} and Minesweeper~\cite{minesweeper} use SMT-based verification, but are monolithic verifiers.
ARC~\cite{arc} and Tiramisu~\cite{abhashkumar2020tiramisu} reduce verification problems to efficient graph analyses, but do not handle all network features.
Plankton~\cite{prabhu2020plankton}, Hoyan~\cite{ye2020accuracy}, and Expresso~\cite{expresso} combine a mix of symbolic analyses and simulation to improve scalability.
Bonsai~\cite{beckett2018control} uses symmetry-based abstractions, Shapeshifter~\cite{beckett2019abstract} uses abstractions on routing fields, and ACORN~\cite{acorn} uses route nondeterminism to improve scalability --- these abstractions are orthogonal to our work.

\para{Reasoning about convergence} Other efforts focus on network routing protocols, to check convergence and identify conditions under which (unique) convergence is guaranteed~\cite{griffin2005metarouting,sobrinho2005algebraic,daggitt2018asynchronous}. However, practical networks often use rich routing policies that do not conform to these conditions. Furthermore, these efforts do not target checking network- and policy-specific routing path properties for given network instances.

\para{Program verification and invariant synthesis} CHC solvers have been widely used in program verification and invariant synthesis~\cite{ChcPldi12,spacer,ChcFlc15}.
In addition to invariants $I$, we use specifications $Q$ to prove properties that are eventually-stable. 

\section{Conclusions}

We developed \sysname, a new framework for verifying eventually-stable control plane properties in a modular way, based on abstracting convergence and automatically generating a \emph{converges-before (CB) graph} from given \interfaces.
We proved our verifier correct in Lean, and illustrate its  benefits in fault tolerance analysis. 
We evaluated \sysname on a collection of benchmark examples, illustrating its expressiveness and performance. 
Given a \cbgraph, we also performed automated \interface synthesis and show its effectiveness in practice.

\section{Acknowledgment and Disclosure}
\para{Acknowledgments}
We thank the anonymous reviewers for their feedback and suggestions. We thank Mia Kaarls for development of parts of the frontend and backend of \sysname. This work was supported (in part) by NSF grants CNS-2312539 and CNS-2319442.

\para{AI disclosure}
AI was used to proofread the paper, correct grammatical errors, and improve formatting. AI was also used in developing an artifact for our prototype.

\bibliographystyle{plain}
\bibliography{refs,refs-ag}

\appendix

\subsection{Additional Resources}

\label{appendix-subsec:repo}

\begin{itemize}
    \item 
A Lean formalization of the network semantics and proofs is available at \url{https://github.com/dz7903/cbgraphs-formalization}.

    \item 
The benchmarks used for evaluations are available at \url{https://github.com/dz7903/cbgraphs-benchmarks}.

    \item 
An artifact of our prototype and benchmarks is available at \url{https://doi.org/10.5281/zenodo.22103905}~\cite{cbver-artifact}. Please see the \texttt{README.md} on running the benchmarks. Note that this artifact includes \sysname and our Batfish based frontend, but does not include our Lean formalization and the compared Timepiece verifier (the Timepiece artifact and source code are available at \url{https://doi.org/10.5281/zenodo.7799158}~\cite{timepiece-artifact} and \url{https://github.com/NetworkVerification/Timepiece/tree/main}~\cite{timepiece-github}).

\end{itemize} 
\subsection{Key definitions and theorems}

\label{appendix-subsec:proof}

We provide a catalog of key definitions and theorems used in \S\ref{sec:theory} and \S\ref{sec:tool} in this section, and connect them with Lean definitions and proofs, which are presented in the repository in Appendix~\ref{appendix-subsec:repo}. Every theorem we present in these sections has a formalized proof in Lean.

\subsubsection{Key definitions}

We quickly review the definitions of a network instance $N$, an asynchronous schedule $S$, and the network semantics with respect to $N$ and $S$, which have been presented in \S\ref{sec:theory}.

\begin{definition}[Network instance]
    A \emph{network} is a 5-tuple $N=(G,R,\mathrm{init},f,\oplus)$ where
    \begin{itemize}
        \itemsep0em
        \item $G=(V,E)$ is the directed graph with nodes $V$ and edges $E$.
        \item $R$ is the set of routes. We assume a least-preferred element $\infty\in R$.
        \item $\mathrm{init}:V\to S$ is the initial routes for each node.
        \item $f:E\to(S \to S)$ are transfer functions. We may write $f(e)$ as $f_e$.
        \item $\oplus: S\times S\to S$ is a binary associative, commutative and selective function (the \emph{merge} function).
    \end{itemize}

    This definition is formalized as \codelink{Network}{https://github.com/dz7903/cbgraphs-formalization/blob/322519adb499fbf4ec8516ba46523e710cf8b897/FormalCbgraphs/Network.lean\#L5} in Lean.
\end{definition}
Note that in routing algebras, $\infty$ and $\oplus$ are usually defined without a partial order over $R$ (and the partial order is usually derived as $a\le b\Leftrightarrow a\oplus b=a$).
In our Lean defintion we use type classes \codelink{OrderTop}{https://leanprover-community.github.io/mathlib4_docs/Mathlib/Order/BoundedOrder/Basic.html\#OrderTop} and \codelink{SemilatticeInf}{https://leanprover-community.github.io/mathlib4_docs/Mathlib/Order/Lattice.html\#SemilatticeInf} in Mathlib (instead of defining them on our own), which assume both a partial order and the algebraic operations.

\begin{definition}[Asynchronous schedule]
    A \emph{schedule} of a network $N$ is a pair $(\alpha,\beta)$ where
    \begin{itemize}
        \itemsep0em
        \item $\alpha:\mathbb T\to 2^V$ is the activation function.
        \item $\beta:E\to(\mathbb T\to\mathbb T)$ is the data flow function (we also write $\beta(e)$ as $\beta_e$).
        \item $\beta$ satisfies the \emph{causality axiom} that messages are delivered after sent, \IE $$\forall e.\ \forall t>0.\ \beta_e(t)<t.$$
    \end{itemize}

    This definition is formalized as \codelink{Schedule}{https://github.com/dz7903/cbgraphs-formalization/blob/322519adb499fbf4ec8516ba46523e710cf8b897/FormalCbgraphs/Network.lean\#L12} in Lean.
\end{definition}

\begin{definition}[Network semantics]
    The \emph{network semantics} of a network $N$ respect to a schedule $S=(\alpha,\beta)$ is a function $\sigma_S:V\times\mathbb T\to S$ defined as
    \begin{equation}
    \sigma_S(v,t)=
    \begin{cases}
    I(v) & t=0 \\
    \sigma_S(v,t-1) & t>0\land v\notin\alpha(t) \\
    \multispan2{$\mathrm{init}(v)\oplus\bigoplus_uf_{(u,v)}(\sigma_S(u,\beta_{(u,v)}(t)))$\hfill} \\
    \phantom{\bigoplus_{(u,v)\in E}f_{u,v}(\sigma(u,\beta_{u,v}(t)))} & t>0\land v\in\alpha(t)
    \end{cases}
    \end{equation}

    This definition is formalized as \codelink{Schedule.sem}{https://github.com/dz7903/cbgraphs-formalization/blob/322519adb499fbf4ec8516ba46523e710cf8b897/FormalCbgraphs/Network.lean\#L47} in Lean.
\end{definition}

Now we present several conditions related to the fairness of a schedule.

\begin{definition}
    Let $S=(\alpha,\beta)$ be a schedule on a network $N$.
    \begin{itemize}
        \itemsep0em
        \item A node $v$ is said \emph{non-failed} if it is activated in $\alpha$ infinitely often, \IE $$\forall T.\ \exists t\ge T.\ v\in\alpha(t).$$
        \item An edge $e$ is said \emph{non-failed} if no message at a certain time is sent infinite times (messages up to that time must be flushed away since a certain time), \IE $$\forall T.\ \exists T'\ge T.\ \forall t\ge T'.\ \beta_e(t)\ge T.$$
    \end{itemize}

    These definitions are formalized as \codelink{Schedule.NodeNonFailed}{https://github.com/dz7903/cbgraphs-formalization/blob/322519adb499fbf4ec8516ba46523e710cf8b897/FormalCbgraphs/Network.lean\#L23} and \codelink{Schedule.EdgeNonFailed}{https://github.com/dz7903/cbgraphs-formalization/blob/322519adb499fbf4ec8516ba46523e710cf8b897/FormalCbgraphs/Network.lean\#L27} in Lean.
\end{definition}

All these conditions are reasonable since BGP-over-TCP guarantees messages are delivered eventually and in-order.

The formal definition of fairness is shown below.
\begin{definition}[Fairness]
    A schedule $S$ is \emph{fair} if every node and every edge is non-failed. $S$ is \emph{fair with at most $k$ failures} if every node is non-failed and there exists a set $F\subseteq E$ with $|F|\le k$ (the set of failed edges) such that all edges in $E-F$ are non-failed.

    These definitions are formalized as \codelink{Schedule.Fair}{https://github.com/dz7903/cbgraphs-formalization/blob/322519adb499fbf4ec8516ba46523e710cf8b897/FormalCbgraphs/Network.lean\#L31} and \codelink{Schedule.FairWithFailure}{https://github.com/dz7903/cbgraphs-formalization/blob/322519adb499fbf4ec8516ba46523e710cf8b897/FormalCbgraphs/Network.lean\#L36} in Lean.
\end{definition}

We also define the concept of connectedness and $k$-fail-connectivity:
\begin{definition}[Connectedness]
    In a graph $G=(V,E)$, let $Roots\subseteq V$ be a set of nodes and $Edges\subseteq E$ be a set of edges. We say a node $v$ is \emph{connected} under $(Roots,Edges)$ if there exists a path in $Edges$ from a node $u\in Roots$ to $v$.
    
    We say a node $v$ is $k$-fail-connected under $(Roots,Edges)$ if it is connected under $(Roots,Edges-F)$ for any $F\subseteq Edges$ with $|F|\le k$.

    These definitions are formalized as \codelink{Graph.Connected}{https://github.com/dz7903/cbgraphs-formalization/blob/322519adb499fbf4ec8516ba46523e710cf8b897/FormalCbgraphs/Graph.lean\#L17} and \codelink{Graph.ConnectedWithFailure}{https://github.com/dz7903/cbgraphs-formalization/blob/322519adb499fbf4ec8516ba46523e710cf8b897/FormalCbgraphs/Graph.lean\#L23} in Lean.
\end{definition}

We define \interfaces and verification conditions in bundle as follows.
\begin{definition}
    A \emph{verification condition structure} over a network $N$ and properties $Y:V\to 2^R$ includes:
    \begin{itemize}
        \itemsep0em
        \item \interfaces $(I,Q)$,
        \item and a \cbgraph $CB=(CBRoots,CBEdges)$ (in practice this \cbgraph is inferred by \sysname), such that
        \item the verification conditions in Fig.~\ref{fig:VCs} all hold, and
        \item the \cbgraph $CB$ is connected.
    \end{itemize}

    This definition is formalized as \codelink{VC}{https://github.com/dz7903/cbgraphs-formalization/blob/322519adb499fbf4ec8516ba46523e710cf8b897/FormalCbgraphs/Soundness.lean\#L24} in Lean.
\end{definition}

Finally, for a $Q$-interface in such a verification condition structure \ag{what does this mean?} \dz{This is the definition 8. It is a bundle of annotations (including Q) and the proof of VCs. It's mainly for convenience. I restated the experssion to refer to Q.}, we define the concept of abstract converge as follows. \ag{please avoid danglers like this} \dz{fixed}
\begin{definition}
    A node $v\in V$ abstractly converges to $Q(v)$ under schedule $S$ at time $\tau$ if $$\forall t\ge\tau.\ \sigma_S(v,t)\in Q(v).$$

    This definition is formalized as \codelink{VC.AbstractlyConverge}{https://github.com/dz7903/cbgraphs-formalization/blob/322519adb499fbf4ec8516ba46523e710cf8b897/FormalCbgraphs/Soundness.lean\#L61} in Lean.
\end{definition}

\subsubsection{Theorems for correctness}

We present formal version of Theorem~\ref{thm:connected-cbgraph}, Theorem~\ref{thm:correctness} and Theorem~\ref{thm:fail-connected-cbgraph}, with several key lemmas formalized in Lean. We also provide proof sketches for each one.

The first lemma shows $I(v)$ holds for any time $t$ and any schedule:
\begin{lemma}
    \label{lemma:invariance}
    Given verification conditions hold, for any schedule $S$, any time $t\in\mathbb T$ and node $v\in V$, $$\sigma_S(v,t)\in I(v).$$
\end{lemma}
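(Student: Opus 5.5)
The plan is a strong induction on the time $t$, proving simultaneously for all nodes $v$ that $\sigma_S(v,t)\in I(v)$. Only $\vcinit$ and $\vcinv$ from Fig.~\ref{fig:VCs} are used, together with the causality axiom $\beta_e(t)<t$ of the schedule. No fairness assumption is needed, so the statement holds for arbitrary schedules as claimed.

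\emph{Base case.} For $t=0$ we have $\sigma_S(v,0)=\mathrm{init}(v)$, which lies in $I(v)$ by $\vcinit(v)$.

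\emph{Inductive step.} Let $t>0$ and assume $\sigma_S(u,t')\in I(u)$ for all nodes $u$ and all $t'<t$. We split on whether $v\in\alpha(t)$.
\begin{itemize}
\item If $v\notin\alpha(t)$, then $\sigma_S(v,t)=\sigma_S(v,t-1)$, which lies in $I(v)$ by the induction hypothesis.
\item If $v\in\alpha(t)$, then $\sigma_S(v,t)=\mathrm{init}(v)\oplus\bigoplus_{(u,v)\in E} f_{(u,v)}(\sigma_S(u,\beta_{(u,v)}(t)))$. By causality, $\beta_{(u,v)}(t)<t$, so the induction hypothesis gives $s_u:=\sigma_S(u,\beta_{(u,v)}(t))\in I(u)$ for every in-neighbour $u$.
\end{itemize}

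The activated case is the only real work, because $\vcinv$ handles a single binary merge $s_v\oplus f_{(u,v)}(s_u)$, whereas the semantics performs an $n$-ary merge. I will use associativity and commutativity of $\oplus$ to rewrite the big merge as a left fold. Enumerate the in-neighbours as $u_1,\dots,u_n$ and set
\[
a_0=\mathrm{init}(v),\qquad a_{i}=a_{i-1}\oplus f_{(u_i,v)}(s_{u_i}),
\]
so that $a_n=\sigma_S(v,t)$. An inner induction on $i$ then shows $a_i\in I(v)$. The start $a_0\in I(v)$ is again $\vcinit(v)$. For the step, $a_{i-1}\in I(v)$ and $s_{u_i}\in I(u_i)$, so $\vcinv((u_i,v))$ yields $a_i\in I(v)$. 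The degenerate case $n=0$ gives $\sigma_S(v,t)=\mathrm{init}(v)$, which is covered by $a_0$.

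The main obstacle is this rewriting of the $n$-ary merge as an iterated binary merge starting from $\mathrm{init}(v)$ rather than from the previous state $\sigma_S(v,t-1)$. This is where associativity and commutativity of $\oplus$ are essential. It is also where the Lean proof would go through a \texttt{Finset} fold lemma (induction on the finset of incoming edges) rather than any network-specific argument. Everything else is a direct application of the verification conditions.
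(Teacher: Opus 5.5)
Your proposal is correct and matches the paper's proof: induction on $t$ with the same three-way case split. It uses $\vcinit$ at $t=0$, the induction hypothesis when $v\notin\alpha(t)$, and $\vcinv$ together with the hypothesis (justified by causality $\beta_e(t)<t$) when $v\in\alpha(t)$; your explicit left fold over the incoming edges, starting from $\mathrm{init}(v)$, simply spells out the step the paper's sketch leaves implicit (the same reasoning it reuses to get $s_2\in I(v)$ in Lemma~\ref{lemma:cbedges}).
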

\begin{proof}[Proof sketch]
    Induction on $t$ and do case analysis on $t=0$ (which uses $\vcinit$), $t>0\land v\notin\alpha(t)$ (inductive hypothesis) and $t>0\land v\in\alpha(t)$ ($\vcinv$ and inductive hypothesis).

    This lemma is formalized as \codelink{VC.invariance}{https://github.com/dz7903/cbgraphs-formalization/blob/322519adb499fbf4ec8516ba46523e710cf8b897/FormalCbgraphs/Soundness.lean\#L42} in Lean.
\end{proof}

The second lemma shows \cbroots satisfy their semantics definition:
\begin{lemma}
    \label{lemma:cbroots}
    Given verification conditions hold, for any schedule $S$ and any $v\in CBRoots$, it abstractly converges to $Q(v)$.
\end{lemma}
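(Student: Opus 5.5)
We show the stronger statement that a \cbroot $v$ abstractly converges to $Q(v)$ at time $\tau=0$, i.e.\ $\sigma_S(v,t)\in Q(v)$ for every $t\in\mathbb T$. The proof is by induction on $t$, following the same case split as in Lemma~\ref{lemma:invariance}. The ingredients are the two conjuncts of $\vcroot(v)$, which holds because $v\in CBRoots$, together with Lemma~\ref{lemma:invariance}. The lemma supplies the fact $\sigma_S(u,t')\in I(u)$ for every neighbor $u$ and every time $t'$. No fairness assumption is needed, so the argument works for any schedule.

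\emph{Base case $t=0$.} Here $\sigma_S(v,0)=\mathrm{init}(v)$, and the first conjunct of $\vcroot(v)$ gives $\mathrm{init}(v)\in Q(v)$.

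\emph{Case $t>0$ with $v\notin\alpha(t)$.} Here $\sigma_S(v,t)=\sigma_S(v,t-1)$, and the inductive hypothesis applies directly.

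\emph{Case $t>0$ with $v\in\alpha(t)$.} This is the only case requiring care. The new state is $\mathrm{init}(v)\oplus\bigoplus_{(u,v)\in E} f_{(u,v)}(\sigma_S(u,\beta_{(u,v)}(t)))$. It is not of the form $s_v\oplus f_{(u,v)}(s_u)$ where $s_v$ is the previous state of $v$, so the hypothesis about time $t-1$ cannot be used here.

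Instead, enumerate the incoming edges $(u_1,v),\dots,(u_n,v)$ and fold the merge left to right. Using associativity of $\oplus$, set $r_0=\mathrm{init}(v)$ and $r_{i}=r_{i-1}\oplus f_{(u_i,v)}(\sigma_S(u_i,\beta_{(u_i,v)}(t)))$. An inner induction on $i$ then shows $r_i\in Q(v)$:
\begin{itemize}
\item $r_0=\mathrm{init}(v)\in Q(v)$ by $\vcroot(v)$.
\item For the step, $r_{i-1}\in Q(v)$ by the inner hypothesis, and $s_{u_i}=\sigma_S(u_i,\beta_{(u_i,v)}(t))\in I(u_i)$ by Lemma~\ref{lemma:invariance}. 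The edge conjunct of $\vcroot(v)$ for $(u_i,v)$ then gives $r_i\in Q(v)$.
\end{itemize}
Hence $\sigma_S(v,t)=r_n\in Q(v)$. In this case the outer inductive hypothesis is not even used.

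\emph{Expected main obstacle.} The difficulty is formal rather than conceptual: the finite $\bigoplus$ over incoming edges has to be handled as an iterated binary merge. In Lean this means rewriting $\bigoplus$ (a finset fold, or \texttt{inf} in \texttt{SemilatticeInf}) into the form above, using associativity and commutativity so the order of folding does not matter. One must also check the degenerate case of no incoming edges, where the empty fold is $\infty$ and $\mathrm{init}(v)\oplus\infty=\mathrm{init}(v)$. I would state a small auxiliary lemma: if a set $P$ contains $a$ and is closed under $x\mapsto x\oplus b_i$ for each $i$, then $a\oplus\bigoplus_i b_i\in P$. This lemma, proved by finset induction, discharges the active-node case directly.
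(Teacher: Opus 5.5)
Your proposal is correct and matches the paper's proof: induction on $t$ with the same three-way case split, using the first conjunct of $\vcroot(v)$ at $t=0$, the inductive hypothesis when $v\notin\alpha(t)$, and the edge conjunct of $\vcroot(v)$ together with Lemma~\ref{lemma:invariance} when $v\in\alpha(t)$. Your explicit fold over the incoming edges, seeded with $\mathrm{init}(v)\in Q(v)$, spells out what the paper's sketch leaves implicit in the active case.
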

\begin{proof}[Proof sketch]
    Induction on $t$ and do case analysis on $t=0$ (which uses the first part of $\vcroot$), $t>0\land v\notin\alpha(t)$ (inductive hypothesis) and $t>0\land v\in\alpha(t)$ (the second part of $\vcroot$ and Lemma~\ref{lemma:invariance}).
    
    This lemma is formalized as \codelink{VC.cbroot}{https://github.com/dz7903/cbgraphs-formalization/blob/322519adb499fbf4ec8516ba46523e710cf8b897/FormalCbgraphs/Soundness.lean\#L65} in Lean.
\end{proof}

The third lemma shows \cbedges satisfy their semantics definition and therefore form the inductive steps of abstract convergence:
\begin{lemma}
    \label{lemma:cbedges}
    Given verification conditions hold, for any schedule $S$ and $e=(u,v)\in CBEdges$, if $v$ is eventually activated, $e$ is eventually flushed in $S$, and $u$ abstractly converges to $Q(u)$, then $v$ abstractly converges to $Q(v)$.
\end{lemma}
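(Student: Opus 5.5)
We prove the claim by picking a threshold time explicitly and then running a case analysis on activations, in the style of Lemmas~\ref{lemma:invariance} and~\ref{lemma:cbroots}. By hypothesis, $u$ abstractly converges to $Q(u)$, so there is a time $\tau_u$ with $\sigma_S(u,t)\in Q(u)$ for all $t\ge\tau_u$. Because $e$ is non-failed (eventually flushed), applying the definition with $T=\tau_u$ gives a time $T'\ge\tau_u$ such that $\beta_e(t)\ge\tau_u$ for all $t\ge T'$. Because $v$ is eventually activated, there is an activation time $t_0\ge T'$ with $v\in\alpha(t_0)$; we may take $t_0>0$. We set $\tau=t_0$ and prove by induction on $t\ge t_0$ that $\sigma_S(v,t)\in Q(v)$.

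\textbf{Activated step.} Suppose $t\ge t_0$ and $v\in\alpha(t)$; this covers the base case $t=t_0$. Then
\[
\sigma_S(v,t)=\Big(\mathrm{init}(v)\oplus\bigoplus_{w\ne u}f_{(w,v)}(\sigma_S(w,\beta_{(w,v)}(t)))\Big)\oplus f_{(u,v)}(\sigma_S(u,\beta_e(t))),
\]
where we regroup using associativity and commutativity of $\oplus$. Call the parenthesized term $s_v$.
\begin{itemize}
\item We have $s_u:=\sigma_S(u,\beta_e(t))\in Q(u)$, because $\beta_e(t)\ge\tau_u$.
\item If $\vcedge(e)$ gives $s_v\in I(v)$, it yields $s_v\oplus f_e(s_u)\in Q(v)$, as required.
\end{itemize}
The key point is to establish $s_v\in I(v)$. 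It is not literally a state of $v$ at any time. We get it by folding $\vcinv$ over the other in-neighbours:
\begin{itemize}
\item Start from $\mathrm{init}(v)\in I(v)$, which is $\vcinit$.
\item Each $\sigma_S(w,\beta_{(w,v)}(t))$ lies in $I(w)$ by Lemma~\ref{lemma:invariance}.
\item Each application of $\vcinv((w,v))$ therefore keeps the partial merge inside $I(v)$.
\end{itemize}
This requires a small auxiliary lemma: a finite merge $\mathrm{init}(v)\oplus\bigoplus_{w\in W}f_{(w,v)}(\cdot)$ over any subset $W$ of in-neighbours lies in $I(v)$. We prove it by induction on $W$, in the same way as Lemma~\ref{lemma:invariance}.

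\textbf{Non-activated step.} Suppose $t>t_0$ and $v\notin\alpha(t)$. Then $\sigma_S(v,t)=\sigma_S(v,t-1)$. Since $t-1\ge t_0$, the induction hypothesis applies and gives $\sigma_S(v,t-1)\in Q(v)$.

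\textbf{Main obstacle.} The main obstacle is the decomposition in the activated step. The semantics merges all in-neighbour messages at once, whereas $\vcedge$ is stated for a single pre-existing state $s_v$ merged with one message. Rewriting the big merge requires a finset-fold lemma that isolates the $u$-term; in Lean this comes from commutativity and associativity. It also requires the auxiliary invariance-of-partial-merge lemma above. The remaining steps, namely extracting $\tau_u$, $T'$ and $t_0$, are routine bookkeeping with the fairness definitions.
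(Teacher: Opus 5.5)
Your proposal is correct and follows essentially the same route as the paper's proof. The paper likewise fixes the convergence time of $u$, uses edge flushing and activation of $v$ to obtain a threshold activation time, and inducts on $t$; in the activated case it isolates the $u$-message, shows the remaining merge lies in $I(v)$ via $\vcinit$, $\vcinv$ and Lemma~\ref{lemma:invariance}, and concludes with $\vcedge$. Your extra care (choosing $t_0>0$, keeping $f_{(u,v)}$ in the isolated term, and stating the partial-merge lemma explicitly) only spells out what the paper's sketch leaves implicit.
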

\begin{proof}[Proof sketch]
    Suppose $\sigma_S(u,t)\in Q(u)$ for any $t\ge t_1$. By the fairness conditions there is a time $t_2>t_1$ such that $v\in\alpha(t_2)$ and $\forall t\ge t_2.\ \beta_{(u,v)}(t)\ge t_1$. Now for any $t\ge t_2$ we show $\sigma_S(v,t)\in Q(v)$ by induction on $t$ and case analysis on $t>t_2\land v\notin\alpha(t)$ and $t\ge t_2\land v\in\alpha(t)$. The former is trivial. For the latter, note that
    \begin{align*}
        \sigma_S(v,t)&=\mathrm{init}(v)\oplus\left(\bigoplus_{(u',v)\in E}\sigma_S(u',\beta_{(u',v)}(t))\right) \\
        &=\sigma_S(u,\beta_{(u,v)}(t))
        &\}s_1 \\
        &\quad\oplus\left(\mathrm{init}(v)\oplus\bigoplus_{\substack{(u',v)\in E \\ u'\ne u}}\sigma_S(u',\beta_{(u',v)}(t))\right)
        &\left.\vphantom{\left(\mathrm{init}(v)\oplus\bigoplus_{\substack{(u',v)\in E \\ u'\ne u}}\sigma_S(u',\beta_{(u',v)}(t))\right)}\right\}s_2
    \end{align*}
    By $\vcinit$, $\vcinv$ and Lemma~\ref{lemma:invariance} we have $s_2\in I(v)$. Then by $\vcedge$ we have $\sigma_S(v,t)\in Q(v)$.

    This lemma is formalized as \codelink{VC.cbedge}{https://github.com/dz7903/cbgraphs-formalization/blob/322519adb499fbf4ec8516ba46523e710cf8b897/FormalCbgraphs/Soundness.lean\#L83} in Lean.
\end{proof}

We now formally state Theorem~\ref{thm:connected-cbgraph} and Theorem~\ref{thm:fail-connected-cbgraph}:
\begin{manualtheorem}{\ref{thm:connected-cbgraph}}
    Given verification conditions hold, for any fair schedule $S$ and any node $v\in V$, $v$ abstractly converge to $Q(v)$ at some time $\tau_v$.
\end{manualtheorem}
\begin{proof}
    By induction on the connectedness of \cbgraph, the base step is Lemma~\ref{lemma:cbroots} and the inductive step is Lemma~\ref{lemma:cbedges}.
    
    This theorem is formalized as \codelink{VC.connected\_cbgraph}{https://github.com/dz7903/cbgraphs-formalization/blob/322519adb499fbf4ec8516ba46523e710cf8b897/FormalCbgraphs/Soundness.lean\#L116} in Lean.
\end{proof}

\begin{manualtheorem}{\ref{thm:fail-connected-cbgraph}}
    Given verification conditions hold, for any schedule $S$ that is fair with at most $k$ failures, and any node $v\in V$, if $v$ is $k$-fail-connected under the \cbgraph $CB$, then $v$ abstractly converge to $Q(v)$.
\end{manualtheorem}
\begin{proof}
    By induction on the connectedness of \cbgraph excluding all failed edges in $S$. The remained is the same as proof of Theorem~\ref{thm:connected-cbgraph}
    
    This theorem is formalized as \codelink{VC.connected\_cbgraph\_with\_failure}{https://github.com/dz7903/cbgraphs-formalization/blob/322519adb499fbf4ec8516ba46523e710cf8b897/FormalCbgraphs/Soundness.lean\#L129} in Lean.
\end{proof}

Finally, we formally state Theorem~\ref{thm:correctness} and its fault-tolerance version. The proofs are direct application of Theorem~\ref{thm:connected-cbgraph} and $\vcprop$.
\begin{manualtheorem}{\ref{thm:correctness}}
    Given verification conditions hold, for any fair schedule $S$ and any node $v\in V$, $$\exists\tau.\ \forall t\ge\tau.\ \sigma_S(v,t)\in Y(v).$$

    This theorem is formalized as \codelink{VC.correctness}{https://github.com/dz7903/cbgraphs-formalization/blob/322519adb499fbf4ec8516ba46523e710cf8b897/FormalCbgraphs/Soundness.lean\#L146} in Lean.
\end{manualtheorem}
\begin{manualtheorem}{\ref{thm:correctness}, fault-tolerance version}
    Given verification conditions hold, for any schedule $S$ that is fair with at most $k$ failures, if the \cbgraph $CB$ is $k$-fail-connected, then for any node $v\in V$, $$\exists\tau.\ \forall t\ge\tau.\ \sigma_S(v,t)\in Y(v).$$

    This theorem is formalized as \codelink{VC.correctness\_with\_failure}{https://github.com/dz7903/cbgraphs-formalization/blob/322519adb499fbf4ec8516ba46523e710cf8b897/FormalCbgraphs/Soundness.lean\#L155} in Lean.
\end{manualtheorem}

\subsubsection{Completeness theorem}

In \S\ref{subsec:alg1} we mentioned that \sysname is complete if every transfer function in the network is \emph{strict monotone} and $<$ is a well-order on $R$. We have also formalized this in Lean as shown below.

\begin{theorem}[Completeness of \sysname]
    Assume $<$ is a well-order on $R$. If all transfer functions $f_e$ in $N$ are strict monotone, \IE $x<y$ implies $f(x)<f(y)$, and property $Y$ holds eventually stably for $\sigma_S(v,t)$ for each $v$ and any fair schedule $S$, then there exist \interfaces $I,Q$ and a connected \cbgraph such that all verification conditions in Fig.~\ref{fig:VCs} hold.
\end{theorem}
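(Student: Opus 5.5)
The plan is to take $Q$ to be a singleton around a canonical stable state $\sigma^*$ and $I$ to be its up-set. The \cbgraph is then read off as a "routing tree" of $\sigma^*$.

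\textbf{Step 1: a canonical stable state.} Since $\oplus$ is selective, $\le$ (defined by $a\le b \Leftrightarrow a\oplus b=a$) is total, and by hypothesis it is a well-order. I would construct $\sigma^*:V\to R$ by a Dijkstra-style transfinite process. Nodes are settled in increasing order of their final route, each with a recorded witness: either ``$\sigma^*(v)=\mathrm{init}(v)$'' or ``$\sigma^*(v)=f_{(u,v)}(\sigma^*(u))$ for an already-settled $u$''. The result should be a fixed point,
$$\sigma^*(v)=\mathrm{init}(v)\oplus\bigoplus_{(u,v)\in E} f_{(u,v)}(\sigma^*(u)).$$
I would then show that every fair schedule abstractly converges to $\sigma^*$; the cleanest target is the synchronous schedule. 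The argument is a well-founded induction on the value $\sigma^*(v)$, using strict monotonicity of the $f_e$ to rule out persistent ``phantom'' routes strictly better than $\sigma^*(v)$.

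\textbf{Step 2: the \interfaces.} Set $Q(v)=\{\sigma^*(v)\}$ and $I(v)=\{s\mid \sigma^*(v)\le s\}$. The essential VCs are then checked as follows.
\begin{itemize}
\item $\vcprop$: apply the hypothesis that $Y$ holds eventually stably to a fair schedule that converges to $\sigma^*$ (Step 1). This yields $\sigma^*(v)\in Y(v)$.
\item $\vcinit$: the fixed-point equation gives $\sigma^*(v)\le\mathrm{init}(v)$.
\item $\vcinv$: if $s_u\ge\sigma^*(u)$, then weak monotonicity (which follows from strict monotonicity on a total order) gives $f(s_u)\ge f(\sigma^*(u))\ge\sigma^*(v)$. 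Hence $s_v\oplus f(s_u)\ge\sigma^*(v)$.
\end{itemize}

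\textbf{Step 3: the \cbgraph.} Use the witnesses recorded in Step 1.
\begin{itemize}
\item \cbroots: take every $v$ whose witness is $\mathrm{init}$. Then $\vcroot(v)$ holds, since every $s_u\in I(u)$ gives $f(s_u)\ge\sigma^*(v)$, so merging leaves $\sigma^*(v)$ unchanged.
\item \cbedges: take every witness edge $(u,v)$. Then $\vcedge$ holds, since $s_v\ge\sigma^*(v)=f(\sigma^*(u))$ forces the merge to equal $\sigma^*(v)$.
\end{itemize}
Connectivity comes from the construction order: each witness $u$ was settled strictly before $v$. Following witnesses backwards is therefore a descent in a well-founded order, and it must terminate at a \cbroot.

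\textbf{Main obstacle.} The hardest part is Step 1, namely building $\sigma^*$ with an acyclic witness structure and proving that every fair schedule converges to it. Without an ``increasing'' assumption ($x<f(x)$), a cycle of equal values could sustain itself, for example if some $f_e$ is the identity. My first attempt would be to settle nodes by first appearance time in the synchronous iteration, rather than by value, to break such ties. Strict monotonicity together with the well-order would then be used to show that values appearing later cannot undercut settled ones.
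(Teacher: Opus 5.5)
Your plan is correct, but it is more operational than the paper's proof, and your ``main obstacle'' can be avoided entirely.

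\textbf{The paper's route.} The paper constructs no fixed point and analyses no schedule. For every walk $p$ ending at $v$ it sets $s_p=f_{(v_{n-1},v_n)}(\cdots f_{(v_1,v_2)}(\mathrm{init}(v_1)))$. It then takes $I(v)$ to be the set of all such $s_p$ and $Q(v)=\{\min_p s_p\}$, where the minimum exists by the well-order. The \cbroots are the nodes whose initial route is already optimal, and the \cbedges are the last edges of optimal walks. The single key lemma is that a prefix of an optimal walk is optimal, which follows from strict monotonicity. This lemma gives $\vcedge$, and by induction on walk length it gives connectivity. Finally, $\vcprop$ comes for free: Theorem~\ref{thm:connected-cbgraph}, whose proof does not use $\vcprop$, shows that every fair schedule converges to $\min_p s_p$, so that value must lie in $Y(v)$.

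\textbf{Why your obstacle disappears.} You never need to prove directly that any schedule converges to $\sigma^*$. Your own Steps 2--3 without $\vcprop$, combined with Theorem~\ref{thm:connected-cbgraph}, already yield this. Acyclic witnesses can also be obtained by following a shortest optimal walk. Moreover, a strictly monotone self-map of a well-order is automatically inflationary, i.e.\ $x\le f(x)$, so the missing ``increasing'' assumption is not a concern. Only equal-value ties need breaking, and hop count or first-appearance time handles them.

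\textbf{Your route still goes through.} The synchronous iterates are non-increasing by weak monotonicity, so they stabilize in a well-order. This gives the fixed point, and the synchronous schedule is fair, which suffices for $\vcprop$. For the witnesses: if $f_{(u,v)}(\sigma_{t_v-1}(u))=\sigma^*(v)$, then strict monotonicity forces $\sigma_{t_v-1}(u)=\sigma^*(u)$, because otherwise $f_{(u,v)}(\sigma^*(u))<\sigma^*(v)$ would contradict the fixed point. Hence $t_u<t_v$.

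\textbf{Trade-offs.} Your up-closed $I(v)=\{s\mid\sigma^*(v)\le s\}$ makes $\vcinv$ a one-line monotonicity argument, and it produces a tree-shaped \cbgraph, namely the converged routing tree. The paper's walk-based definition is purely static and schedule-free, and it needs no convergence argument at all.
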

\begin{proof}[Proof sketch]
    For each path $p=[v_1,v_2,\cdots,v_n]$ in the network $N$, define the route for $p$ as $$s_p=f_{(v_{n-1},v_n)}(\cdots f_{(v_1,v_2)}(\mathrm{init}(v_1))).$$
    
    Let $\pi(v)$ be all paths in $N$ that end with $v$. We then define
    \begin{align*}
        I(v)&=\bigcup_{p\in\pi(v)}\{s_p\} \\
        Q(v)&=\left\{\min_{p\in\pi(v)}s_p\right\} \\
        CBRoots&=\{v\in V\mid[v]\in Q(v)\} \\
        CBEdges&=\{(u,v)\in E\mid\exists p\in\pi(u).p+\!\!\!\!+[v]\in Q(v)\}
    \end{align*}
    (the existence of $\min_{p\in\pi(v)}s_p$ comes from the well-order; $p+\!\!\!\!+[v]$ is to append $v$ at the end of path $p$), all VCs in Fig.~\ref{fig:VCs} except $\vcprop$ and $\vcedge$ can be easily verified.

    The tricky ones are $\vcedge$ and the connectedness of \cbgraph. In fact, thanks to strict monotonicity, we can show the following key property: if $s_{[p,v]}\in Q(v)$ for $p\in\pi(u)$ and $(u,v)\in E$, then $s_p\in Q(u)$. $\vcedge$ and connectedness come from this property.

    For $\vcprop$, note that Theorem~\ref{thm:connected-cbgraph} ensures each node $v$ abstractly converges to $Q(v)$ in any fair schedule; by definition of $Q$, it concretely converges to $\min_{p\in\pi(v)}$. Since $Y(v)$ holds eventually stably, it must belong to $Y(v)$, \IE $\vcprop$ verified.

    This theorem is formalized as \codelink{completeness}{https://github.com/dz7903/cbgraphs-formalization/blob/322519adb499fbf4ec8516ba46523e710cf8b897/FormalCbgraphs/Completeness.lean\#L45} in Lean.
\end{proof}

\subsubsection{The maximality of generated \cbgraph}

In \S\ref{subsec:cbgraphs} we claimed that \sysname automatically \cbgenerates a \emph{maximal} \cbgraph. We formalize and prove its maximality as the following theorem:

\begin{theorem}[Maximality of generated \cbgraph]
    Given a network instance $N$, its property $Y$ and annotations $I,Q$, if $(\mathrm{CBRoots},\mathrm{CBEdges})$ is the \cbgraph generated by Algorithm~\ref{alg:verification}, then for any \cbgraph $(\mathrm{CBRoots}',\mathrm{CBEdges}')$ that satisfies the verification conditions in Fig.~\ref{fig:VCs}, we have $\mathrm{CBRoots}'\subseteq\mathrm{CBRoots}$ and $\mathrm{CBEdges}'\subseteq\mathrm{CBEdges}$.
\end{theorem}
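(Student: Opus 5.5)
The maximality claim should follow almost directly from the structure of Algorithm~\ref{alg:verification}. The generated \cbgraph is, by construction, exactly the set of nodes and edges whose local VCs are valid:
\[
\mathrm{CBRoots}=\{v\in V\mid \vcroot(v)\text{ valid}\},\qquad \mathrm{CBEdges}=\{e\in E\mid \vcedge(e)\text{ valid}\}.
\]
I will first establish this characterization. The loops in phase~1 iterate over every $v\in V$ and every $e\in E$. They add $v$ (resp.\ $e$) precisely when $\vcroot(v)$ (resp.\ $\vcedge(e)$) is valid, and nothing else ever modifies these sets. In the Lean setting I would state the generated \cbgraph as a pair of set comprehensions filtering $V$ and $E$ by the corresponding VC predicates, rather than reasoning about an imperative loop. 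The parallel execution order is irrelevant, since each membership decision depends only on $N$, $I$, $Q$ and the single node or edge under consideration, never on the current contents of the sets.

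Second, I unfold what it means for $(\mathrm{CBRoots}',\mathrm{CBEdges}')$ to \emph{satisfy the verification conditions in Fig.~\ref{fig:VCs}}. For every $v\in\mathrm{CBRoots}'$, $\vcroot(v)$ holds, and for every $e\in\mathrm{CBEdges}'$, $\vcedge(e)$ holds. The essential point is that $\vcroot(v)$ and $\vcedge(e)$ are closed formulas whose truth depends only on the fixed data $N,I,Q$. In particular, they do not depend on which other nodes or edges belong to the \cbgraph, and not even on connectivity. Hence the subset inclusions follow pointwise. Take $v\in\mathrm{CBRoots}'$; then $\vcroot(v)$ holds, so $v\in\mathrm{CBRoots}$ by the characterization. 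The same argument gives $e\in\mathrm{CBEdges}'\Rightarrow e\in\mathrm{CBEdges}$. Note that the connectedness of $\mathrm{CBRoots}'$ is not needed, and neither is the question of whether the algorithm ultimately returns Correct.

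The main obstacle is modeling rather than mathematics. I must pin down that ``valid'' as checked by the SMT solver coincides with the semantic truth of the VC formula, i.e., treat the solver as a sound and complete validity oracle for these quantified formulas. I must also handle the early ``return Fail'' branches. If an essential VC fails, the algorithm may stop before populating the sets. I would resolve this by defining the generated \cbgraph as the sets computed in phase~1 independently of early termination. Alternatively, I would assume the essential VCs hold, which is the only case in which the generated \cbgraph is used. With that convention fixed, the proof reduces to the two-line pointwise inclusion above.
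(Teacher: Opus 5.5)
Your proposal is correct and is essentially the paper's argument, which the paper dismisses as trivial. The generated $\mathrm{CBRoots}$ and $\mathrm{CBEdges}$ are by definition exactly the nodes and edges whose $\vcroot$/$\vcedge$ formulas are valid, and these formulas depend only on $N$, $I$, $Q$, so any VC-satisfying CB-graph is contained in them pointwise. Your handling of the modeling issues (treating the solver as a validity oracle, and defining the generated sets as comprehensions independent of early termination) is the natural way to make this precise.
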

\begin{proof}
    Proof is trivial. This theorem is formalized as \codelink{generatedCBRoots\_maximal}{https://github.com/dz7903/cbgraphs-formalization/blob/322519adb499fbf4ec8516ba46523e710cf8b897/FormalCbgraphs/Maximality.lean\#L17} and \codelink{generatedCBEdges\_maximal}{https://github.com/dz7903/cbgraphs-formalization/blob/322519adb499fbf4ec8516ba46523e710cf8b897/FormalCbgraphs/Maximality.lean\#L22}.
\end{proof}

\subsection{Encoding BGP protocol in SMT}

\label{appendix-subsec:smt}

Here we present a way to encode BGP protocol in SMT theories.
A BGP route is encoded in \sysname as a quadruple $S=(\mathrm{prefix}, \mathrm{lp}, \mathrm{aspath}, \mathrm{comm})$. (The actual BGP protocol includes more fields. For simplicity, we do not show the model of all BGP features here.) Here $\mathrm{prefix}$ is a 32-bit vector (assuming IPv4) indicating the destination prefix, $\mathrm{lp}$ is an integer indicating local preference, $\mathrm{aspath}$ is an ordered sequence of AS numbers for each router this message has traversed (the AS-path in BGP), and $\mathrm{comm}$ is a finite set of community tags. We use expression $\mathrm{prefix}(s)$ to denote the prefix field of a route $s$.

In BGP, the merge function encodes the preference order between two routes, checking the local preference first (higher is preferred), and if it is the same, then the length of the AS-path (shorter path is preferred). Accordingly, the merge operator is modeled as:
$$\mathrm{merge}(s_1,s_2)=\begin{cases}
    s_1 & \mathrm{lp}(s_1)>\mathrm{lp}(s_2) \\
    s_2 & \mathrm{lp}(s_1)<\mathrm{lp}(s_2) \\
    s_1 & \mathrm{lp}(s_1)=\mathrm{lp}(s_2) \\
    &\land\ \mathrm{len}(\mathrm{aspath}(s_1))<\mathrm{len}(\mathrm{aspath}(s_2)) \\
    s_2 & \mathrm{lp}(s_1)=\mathrm{lp}(s_2) \\
    &\land\ \mathrm{len}(\mathrm{aspath}(s_1))>\mathrm{len}(\mathrm{aspath}(s_2)) \\
    s_1 & \text{otherwise}
\end{cases}$$
Cases can be expressed through if-then-else expressions, $\mathrm{len}(l)$ is a pre-defined function in the theory of sequences, returning the length of the list $l$. (We write $\mathrm{len}(s)$ in short for $\mathrm{len}(\mathrm{path}(s))$ in the paper.) For the last case, two routes are equally preferred, and one of them is chosen arbitrarily.

The transfer functions are modeled based on the configuration and its semantics. For example, a \texttt{set community 100:2} statement in Cisco’s configuration language sets the communities field to be $\{100:2\}$, which can be modeled as:
\begin{align*}
\mathrm{tr}(s,s')\Leftrightarrow&\ \mathrm{comm}(s')=\{100:2\} \\
&\land\mathrm{prefix}(s')=\mathrm{prefix}(s) \\
&\land\mathrm{lp}(s')=\mathrm{lp}(s) \\
&\land\mathrm{aspath}(s')=\mathrm{aspath}(s)
\end{align*}
\subsection{Error Debugging Process}
\label{appendix-subsec:debug}

\begin{figure*}
    \centering
    \begin{tabular}{cll}
        \toprule
        VC and Counterexample & User Beliefs & Cause and Suggested Action \\
        \midrule
        \multirowcell{2.5}{
            $VC_{\mathrm{Init}}$ is violated: \\
            $s_v=\mathrm{init}(v)\notin I(v)$}
        & case 1: $s_v$ \xmark & $\mathrm{init}(v)$ is wrong, \textbf{bug in configuration} \\
        \cmidrule{2-3}
        & case 2: $s_v$ \cmark & $I(v)$ is too strong, weaken it \\
        \midrule
        \multirowcell{4}{
            $VC_{\mathrm{Inv}}$ is violated: \\
            $\begin{aligned}
                &s_u\in I(u), s_v\in I(v), \\
                &s_v'=s_v\oplus f_{(u,v)}(s_u) \notin I(v)
            \end{aligned}$}
        & case 1: $s_u$ or $s_v$ \xmark & $I(u)$ or $I(v)$ is too weak, strengthen it \\
        \cmidrule{2-3}
        & case 2: $s_u,s_v$ and $s_v'$ \cmark & $I(v)$ is too strong, weaken $I(v)$ \\
        \cmidrule{2-3}
        & case 3: $s_v'$ \xmark & $f_{(u,v)}$ is wrong, 
        \textbf{bug in configuration} \\
        \midrule 
        \multirowcell{2.5}{
            $VC_{\mathrm{Prop}}$ is violated: \\
            $s_v\in Q(v), s_v \notin Y(v)$}
        & case 1: $s_v$ \xmark & $Q(v)$ is too weak, strengthen it \\
        \cmidrule{2-3}
        & case 2: $s_v$ \cmark & \textbf{property $Y(v)$ fails on a plausible route} \\
        \midrule \midrule
        \multirowcell{5}{
            $VC_{\mathrm{CBroot}}$ is violated \\
             (first conjunct): \\
            $s_v=\mathrm{init}(v)\notin Q(v)$}
        & \multirowcell{2}[0pt][l]{case 1: $v$ should not \\ be a \cbroot} & \multirowcell{2}[0pt][l]{skip this counterexample} \\
        \\
        \cmidrule{2-3}
        & case 2: $s_v$ \xmark & $\mathrm{init}(v)$ is wrong, \textbf{bug in configuration} \\
        \cmidrule{2-3}
        & case 3: $s_v$ \cmark & $Q(v)$ is too strong, weaken it \\
        \midrule
        \multirowcell{6}{
            $VC_{\mathrm{CBroot}}$ is violated \\
            (second conjunct): \\
            $\begin{aligned}
                &s_u\in I(u), s_v\in Q(v), \\
                &s_v'=s_v\oplus f_{(u,v)}(s_u) \notin Q(v)
            \end{aligned}$}
        & \multirowcell{2}[0pt][l]{case 1: $v$ should not \\ be a \cbroot} & \multirowcell{2}[0pt][l]{skip this counterexample} \\
        \\
        \cmidrule{2-3}
        & case 2: $s_u$ or $s_v$ \xmark & $I(u)$ or $Q(v)$ is too weak, strengthen it \\
        \cmidrule{2-3}
        & case 3: $s_u,s_v$ and $s_v'$ \cmark & $Q(v)$ is too strong, weaken it \\
        \cmidrule{2-3}
        & case 4: $s_v'$ \xmark & $f_{(u,v)}$ is wrong, 
        \textbf{bug in configuration} \\
        \midrule
        \multirowcell{6}{
            $VC_{\mathrm{CBedge}}$ is violated: \\
            $\begin{aligned}
                &s_u\in Q(u),s_v\in I(v), \\
                &s_v'=s_v\oplus f_{(u,v)}(s_u) \notin Q(v)
            \end{aligned}$}
        & \multirowcell{2}[0pt][l]{case 1: $(u,v)$ should \\ not be a \cbedge} & \multirowcell{2}[0pt][l]{skip this counterexample} \\
        \\
        \cmidrule{2-3}
        & case 2: $s_u$ or $s_v$ \xmark & $Q(u)$ or $I(v)$ is too weak, strengthen it \\
        \cmidrule{2-3}
        & case 3: $s_u,s_v$ and $s_v'$ \cmark & $Q(v)$ is too strong, weaken it \\
        \cmidrule{2-3}
        & case 4: $s_v'$ \xmark & $f_{(u,v)}$ is wrong, 
        \textbf{bug in configuration} \\
        \bottomrule
    \end{tabular}
    \caption{Error Debugging Process: Counterexample shows the VC violation in the first column,  ``$s_v$ \cmark'' in the second column indicates the user's belief that route $s_v$ is a \emph{plausible} route that be selected by $v$, and ``$s_v$ \xmark'' indicates the user's belief that $s_v$ is an \emph{implausible} route that cannot be selected by $v$; the last column shows the cause and suggested action for each case.}
    \label{fig:errors-full}
\end{figure*}

Fig.~\ref{fig:errors-full} presents how a user of \sysname can debug the VC violations found during Algorithm~\ref{alg:verification}. The top part (separated by a double-line from the bottom part) describes violations in the essential VCs ($\vcinit$, $\vcinv$ or $\vcprop$) that result in immediate failure in phase 1; the bottom part describes VC violations (in $\vcroot, \vcedge$) that are saved in phase 1, to possibly recover from a failure in phase 2 if the connectedness check on the constructed \cbgraph fails. We now explain some cases in Fig.~\ref{fig:errors-full} with more details below.

\paragraph*{Violation of $\vcinv$.} To illustrate the process in detail, consider a violation of $\vcinv$ (shown in Fig.~\ref{fig:errors-full}). 
Recall that $\vcinv$ (shown earlier in Fig.~\ref{fig:VCs}) states: $\forall s_u,s_v.\ s_u\in I(v)\land s_v\in I(v)\Rightarrow s_v\oplus f_{(u,v)}(s_u)\in I(v)$. 
When this VC check fails, a counterexample is an assignment to $s_u$ and $s_v$ such that the formula is false. That is, $s_u\in I(u)$ and $s_v\in I(v)$ are true, but $s_v\oplus f_{(u,v)}(s_u)\in I(v)$ is false (shown in column~1,  Fig.~\ref{fig:errors-full}).

To diagnose the cause of this VC violation, the user performs a case analysis based on their beliefs (cases shown in column~2).
They first decide whether $s_u$ and $s_v$ are \emph{plausible} routes that could be selected at nodes $u$ and $v$, respectively, based on their beliefs about the routing behavior at $u$ and $v$. Plausibility indicates whether the user believes that such $s_u$ and $s_v$ can occur in a real simulation; in the case of $\vcinv$ this means there exists a schedule $S$ such that $s_u$ and $s_v$ are selected by $u$ and $v$ at some time.

The first case (case 1 in the second column) is if one or both of $s_u, s_v$ are not plausible.
In this case, the corresponding $I(u)$ or $I(v)$ is considered erroneous, \IE it is too weak. 
To fix this, the \interface $I(u)$ or $I(v)$ should be \emph{strengthened}, \IE it should be made more restrictive by excluding $s_u$ and/or $s_v$.

Otherwise, if both $s_u$ and $s_v$ are plausible, the user will next examine $s'_v = s_v \oplus f_{(u,v)(s_u)}$, the newly selected route at $v$. 
Note that $s'_v$ does not belong to $I(v)$ according to the counterexample. 
If $s'_v$ is plausible (case 2), then $I(v)$ is too strong and should be weakened to include $s'_v$. 
However, if $s'_v$ is not plausible (case 3), then the transfer function $f_{(u,v)}$ is considered erroneous and should be repaired.

\paragraph*{Violation of $\vcprop$.}
Now, consider a failure where $\vcprop$ is violated. The first case is that the counterexample route $s_v$ is not plausible, \IE $s_v$ is a \emph{spurious} counterexample. In this case, the user can strengthen $Q(v)$ to exclude $s_v$ during refinement. This is similar to classic counterexample guided abstraction refinement (CEGAR)~\cite{ClarkeCegar00}. However, if $s_v$ is plausible, then \sysname has found a bug for property $Y(v)$. The user could either accept the reported bug, or weaken the property to $Y'(v)$ and (incrementally) verify $\vcprop$ again. 

\paragraph*{Violations of $\vcroot$ and $\vcedge$.}
\sysname can fail in phase~2 if there does not exist any connected \cbgraph according to $\vcroot$ and $\vcedge$ (identified in phase~1), \IE if some nodes in the network are found to be \emph{unconnected} to any \cbroot via \cbedges.
For an unconnected node $v$, $\vcroot$ is invalid at $v$, and $\vcedge$ is invalid for every incoming edge to $v$. 
However, unlike in phase 1 failures, not every invalid VC requires an action for repair:
only that \emph{at least one} of the VCs is repaired (to connect the node).

Users rarely need to repair $\vcroot$; in practice \cbroots are expected to be the destination itself or border routers with direct announcements to external destinations.
For repairing $\vcedge$, the user can examine the node's incoming edges and diagnose why $\vcedge$ fails for all of them.
Repairing all the failures may not be necessary to connect the node but can provide better fault tolerance.
The repair process for fixing $\vcedge$ is similar to that for $\vcinv$.

\subsection{Error Debugging Process: an Example}

\label{appendix-subsec:debug-example}

\begin{figure}
    \centering
    \includegraphics[width=\linewidth]{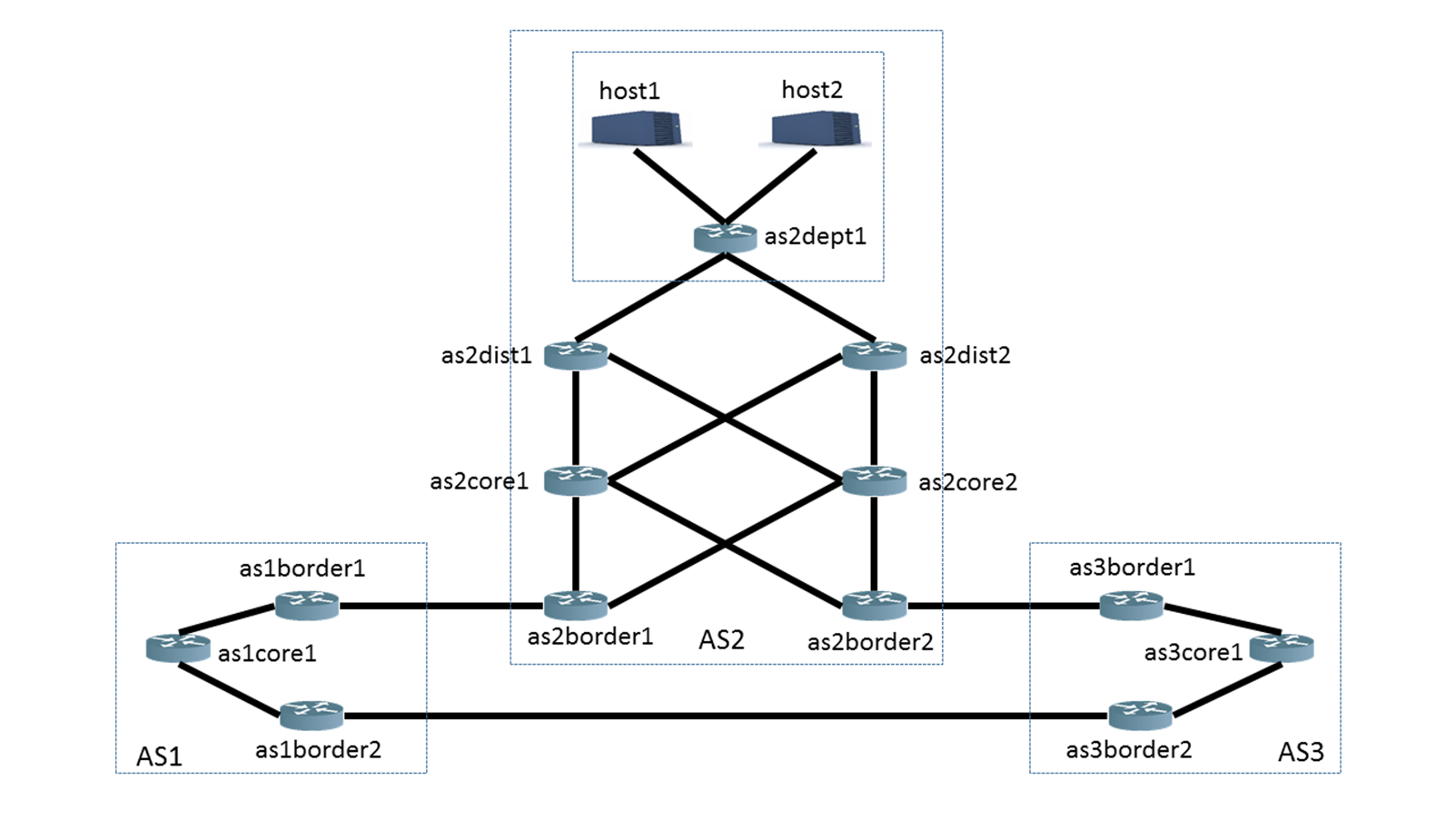}
    \caption{Topology of a campus network from the Batfish tutorial \cite{batfish-tutorials-bgp}.
    }
    \label{fig:example-network-topology}
\end{figure}

\begin{figure*}[ht]
    \centering
    \begin{tabular}{ccccc}
        \toprule
        Step & \Interface & Failed check and location & Cause & Action \\
        \midrule
        \multirowcell{3}{1} & \multirowcell{3}{$\begin{aligned}
            Q_{AS2}&=\{s\mid s\ne\infty\}, \\
            I_{AS2}&=R
        \end{aligned}$} & \multirowcell{3}{\textbf{\convcheck} at edges \\ (as2dist1, as2dept1) and \\ (as2dist2, as2dept1)} & \multirowcell{3}{as2dept1 will drop \\
        route without \\ community 3:2} & \multirowcell{3}{Refine $Q_{AS2}$ \\ and $I_{AS2}$} \\
        \\
        \\
        \midrule
        \multirowcell{6}{2} & \multirowcell{6}{$\begin{aligned}
            Q_{AS2}&=\left\{s\ \middle|\ \begin{aligned}
                &s\ne\infty \\
                &\land 3:2\in\mathrm{tag}(s)
            \end{aligned}\right\} \\
            I_{AS2}&=\left\{s\ \middle|\ \begin{aligned}
                &s=\infty \\
                &\lor 3:2\in\mathrm{tag}(s)
            \end{aligned}\right\}
        \end{aligned}$} & \multirowcell{6}{\textbf{\Invcheck} at edge \\ (as1border2, as2border1)} & \multirowcell{6}{as2border1 can select \\ a route without \\ community 3:2 \\ (but with 1:2)} & \multirowcell{6}{Refine $Q_{AS2}$ \\ and $I_{AS2}$} \\
        \\
        \\
        \\
        \\
        \\
        \midrule
        \multirowcell{8}{3} & \multirowcell{8}{$\begin{aligned}
            Q_{AS2}&=\left\{s\ \middle|\ \begin{aligned}
                &s\ne\infty \\
                &\land\left(\begin{aligned}
                    & 1:2\in\mathrm{tag}(s)\\
                    & \lor 3:2\in\mathrm{tag}(s)
                \end{aligned}\right)
            \end{aligned}\right\} \\
            I_{AS2}&=\left\{s\ \middle|\ \begin{aligned}
                &s=\infty \\
                &\lor\left(\begin{aligned}
                    & 1:2\in\mathrm{tag}(s)\\
                    & \lor 3:2\in\mathrm{tag}(s)
                \end{aligned}\right)
            \end{aligned}\right\}
        \end{aligned}$} & \multirowcell{8}{\textbf{\convcheck} at edges \\ (as2dist1, as2dept1) and \\ (as2dist2, as2dept1) \\} & \multirowcell{8}{as2dept1 wrongly \\ dropped routes \\ with
        community 1:2} & \multirowcell{8}{Repair \\ $f_{(\textrm{as2dist1},\textrm{as2dept1})}$ and \\ $f_{(\textrm{as2dist2},\textrm{as2dept1})}$} \\
        \\
        \\
        \\
        \\
        \\
        \\
        \\
        \midrule
        4 & same as above & \multicolumn{3}{c}{all checks pass!} \\
        \bottomrule
    \end{tabular}
    \caption{Error debugging process: an example (for the network in Fig.~\ref{fig:example-network-topology}).}
    \label{fig:example-debug}
\end{figure*}

Fig.~\ref{fig:example-network-topology} shows the topology of a campus network taken from Batfish tutorial~\cite{batfish-tutorials-bgp}. Fig.~\ref{fig:example-debug} shows an example of the debugging process based on the network in Fig.~\ref{fig:example-network-topology}. Suppose the network operator modifies the configuration by adding communities 1:2 and 3:2 when routes are imported to AS2 from AS1 and AS3, respectively, and only allows routes with such communities to be imported to as2dept1. However, the operator 
makes a mistake in the configuration at as2dept1, such that it only permits routes with community 3:2.

The operator wishes to verify that any routes that originate from AS2 can reach as2dept1. The operator first specifies $Q_{AS2}$ as the set of any reachable route, \IE $\{s\mid s\ne\infty\}$ and $I_{AS2}$ as the set of all routes, \IE $R$ (shown as step~1 in Fig.~\ref{fig:example-debug}). This results in a disconnected \cbgraph, and the counterexamples of \convcheck at the edges (as2dist1,as2dept1) and (as2dist2-as2dept1) show that routes without community 3:2 are dropped at these edges.

Thus, the operator can refine $Q_{AS2}$ and $I_{AS2}$ to \emph{include} routes with community 3:2 (shown as step 2 in Fig.~\ref{fig:example-debug}). However, this time, \invcheck fails; the counterexample at the edge (as1border2, as2border1) shows that a route with community 1:2 but without community 3:2 is imported at AS2, thus violating the invariant $I_{AS2}$.

Because the operator believes that the route with community 1:2 is \emph{plausible} in AS2, they can then refine $Q_{AS2}$ and $I_{AS2}$ to include both communities 1:2 and 3:2 (step 3 in Fig.~\ref{fig:example-debug}). This results again in a disconnected \cbgraph. The counterexamples of \convcheck show that routes with community 1:2 but without community 3:2 are dropped.

This time, since routes with community 1:2 is \emph{plausible} in AS2, the operator knows that the configurations are wrong. Therefore, the operator repairs the configuration at both edges (as2dist1, as2dept1) and (as2dist2, as2dept2) either manually or automatically, and when suitably repaired, all checks will pass after the repair.
\subsection{A counterexample on the completeness of \sysname algorithm}

\label{appendix-subsec:completeness}

Now we present a counterexample showing that \sysname is \emph{incomplete}, \IE there exists a network instance $N$ and its properties $Y$ such that $Y$ are eventually-stable,
but there are no \interfaces $Q,I$ such that \sysname could \cbgenerate a connected \cbgraph. This counterexample has been formalized in Lean.

Our example is again inspired by Daggit and Griffin's work~\cite{daggitt2018rate}. The network is shown in Fig.~\ref{fig:incompleteness-network}, with four nodes $A,B,C,D$ and four directed edges $AB,AC,BC,CD$.

\begin{figure}[t]
    \centering
    \begin{tikzpicture}
        \node[node1](A){A};
        \node[node2,above right=0.7cm and 0.4cm of A](B){B};
        \node[node2,right=1cm of A](C){C};
        \node[node2,right=1cm of C](D){D};
        \draw[->](A)--(B);
        \draw[->](A)--(C);
        \draw[->](B)--(C);
        \draw[->](C)--(D);
    \end{tikzpicture}
    \caption{Example network for the incompleteness of \sysname. $A$ is the destination.}
    \label{fig:incompleteness-network}
\end{figure}

\begin{figure}[t]
    \centering
    \vspace{0.1in}
    \begin{tabular}{cccccccc}
        \toprule
        $r=$ & $r_0$ & $r_1$ & $r_2$ & $r_3$ & $r_4$ & $\infty$ \\
        \midrule
        $f_{AB}(r)=f_{BC}(r)=$ & $r_1$ & $r_2$ & $r_3$ & $r_4$ & $\infty$ & $\infty$ \\
        \midrule
        $f_{AC}(r)=$ & $r_3$ & $r_4$ & $\infty$ & $\infty$ & $\infty$ & $\infty$ \\
        \midrule
        $f_{CD}(r)=$ & $r_1$ & $r_2$ & $\infty$ & $r_4$ & $\infty$ & $\infty$ \\
        \bottomrule
    \end{tabular}
    \caption{Transfer functions of the network in Fig.~\ref{fig:incompleteness-network}. Each row represents a group of transfer functions, and each column shows the value of $f_e(r)$ for each different $r$.}
    \label{fig:incompleteness-transfer}
\end{figure}

Let the set of routes be $R=\{r_0,r_1,r_2,r_3,r_4,\infty\}$ (where $r_0<r_1<r_2<r_3<r_4<\infty$, and $\oplus$ chooses the smaller one). Let the initial routes of the network be: 
$$\mathrm{init}(v)=\begin{cases}
    r_0 & v=A \\
    \infty & \text{otherwise}
\end{cases}$$ and the transfer functions of the network be shown in Fig.~\ref{fig:incompleteness-transfer}. In Fig.~\ref{fig:incompleteness-transfer}, $f_{AB},f_{BC}$ increase the route by one level, $f_{CD}$ increases the route by one level except that $r_2$ is dropped, and $f_{AC}$ increases the route by three levels.
Let the properties be: 
$$Y(v)=\begin{cases}
    \{s\mid\mathrm{True}\} & v\in\{A,B,C\} \\
    \{\infty\} & v=D
\end{cases}$$
We now state the following claims.

\begin{fact}
$Y$ properties are eventually-stable in the network in Fig.~\ref{fig:incompleteness-network}. However, there are no \interfaces $Q,I$ such that the \cbgraph \cbgenerated  according to Fig.~\ref{fig:VCs} is connected.
\end{fact}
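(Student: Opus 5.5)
The plan has two halves: first show that $Y$ is eventually stable, then show that no choice of $I,Q$ passing the essential VCs can connect $D$. Both halves use one explicit adversarial schedule, and the negative half rests on Lemma~\ref{lemma:invariance} and Theorem~\ref{thm:connected-cbgraph}.

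\emph{Eventual stability.} Fix any fair schedule $S$.
\begin{itemize}
\item $A$ has no incoming edges, so $\sigma_S(A,t)=r_0$ for all $t$.
\item $B$'s only in-neighbour is $A$. Once $(A,B)$ is flushed and $B$ is activated, $B$ holds $f_{AB}(r_0)=r_1$ forever.
\item $C$ has in-neighbours $A$ and $B$. After both in-edges are flushed past the time $B$ stabilised, every activation of $C$ computes $\infty\oplus f_{AC}(r_0)\oplus f_{BC}(r_1)=r_3\oplus r_2=r_2$.
\item $D$'s only in-neighbour is $C$. After $(C,D)$ is flushed past $C$'s stabilisation time and $D$ is activated, $D$ holds $f_{CD}(r_2)=\infty$ forever.
\end{itemize}
Since $Y(A)=Y(B)=Y(C)=R$ and $Y(D)=\{\infty\}$, every node eventually stays in $Y$. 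Each step uses the non-failed node and edge definitions exactly as in the proof sketch of Lemma~\ref{lemma:cbedges}.

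\emph{No connected \cbgraph.} Suppose $I,Q$ pass $\vcinit,\vcinv,\vcprop$ and the generated \cbgraph is connected. The key tool is a concrete schedule $S_0$ witnessing transient routes.
\begin{itemize}
\item At time $1$, activate only $C$. It reads $A$ at time $0$, giving $f_{AC}(r_0)=r_3$, and reads $B$'s initial $\infty$. So $\sigma_{S_0}(C,1)=r_3$.
\item At time $2$, activate $D$ reading $C$ at time $1$. This gives $\sigma_{S_0}(D,2)=f_{CD}(r_3)=r_4$.
\end{itemize}
By Lemma~\ref{lemma:invariance}, which holds for any schedule, $r_3\in I(C)$ and $r_4\in I(D)$. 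Also $Q(D)\subseteq Y(D)=\{\infty\}$ by $\vcprop$. I then rule out both ways $D$ could be connected.
\begin{itemize}
\item $D$ is not a \cbroot. The second conjunct of $\vcroot(D)$, taken with $s_u=r_3\in I(C)$ and $s_v=\infty$ (if $\infty\notin Q(D)$ the first conjunct already fails), requires $\infty\oplus r_4=r_4\in Q(D)$, which is false.
\item $(C,D)$, the only in-edge of $D$, cannot usefully be a \cbedge. Taking $s_v=r_4\in I(D)$ in $\vcedge((C,D))$ gives $r_4\oplus f_{CD}(s_u)\preceq r_4$, so the result is never $\infty$. Hence $\vcedge((C,D))$ can hold only if $Q(C)=\emptyset$.
\end{itemize}

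If $Q(C)=\emptyset$, connectivity of the \cbgraph together with Theorem~\ref{thm:connected-cbgraph} forces $C$ to abstractly converge to $\emptyset$ under any fair schedule. That is impossible because fair schedules exist. Hence $D$ is unconnected, which contradicts the assumption.

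The step I expect to need the most care is the schedule $S_0$. It must satisfy the causality axiom ($\beta_e(t)<t$) and be extendable to a genuine schedule; fairness is not needed for Lemma~\ref{lemma:invariance}. I would define $\alpha(1)=\{C\}$, $\alpha(2)=\{D\}$, take all later activations arbitrary, and set $\beta_e(t)=t-1$. This gives exactly the two transient routes $r_3$ at $C$ and $r_4$ at $D$ used above.
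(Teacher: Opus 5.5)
Your proof is correct and follows essentially the paper's route: a transient schedule puts $r_4$ into $I(D)$ by Lemma~\ref{lemma:invariance}, and $\vcprop$ forces $Q(D)\subseteq\{\infty\}$, so $\vcedge((C,D))$ fails (merging with $r_4$ never yields $\infty$), $D$ cannot be a \cbroot, and $D$ is left unconnected. The differences are minor and sound: you spell out the convergence half that the paper defers to Lean, you need only $Q(C)\neq\emptyset$ (from Theorem~\ref{thm:connected-cbgraph}) rather than the specific fact $r_2\in Q(C)$, and you refute $\vcroot(D)$ directly from the VC with $s_u=r_3$ rather than through Lemma~\ref{lemma:cbroots}.
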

\begin{proof}
First we need to show that $Y$ properties are eventually-stable in the network. In fact, we can show that in any fair schedule $S$, $\sigma_S(v,t)$ concretely converges to $r_0,r_1,r_2,\infty$ for $v=A,B,C,D$. However the argument is quite tedious; we refer to the formalized proofs in Lean (Appendix~\ref{appendix-subsec:repo}).

To show that no \interfaces $Q,I$ with connected \cbgraph \cbgenerated by Fig.~\ref{fig:VCs} exist, suppose there are such $Q$ and $I$. We have the following facts:
\begin{enumerate}
    \item $r_2\in Q(C)$. This is because Theorem~\ref{thm:connected-cbgraph} ensures $\sigma_S(C,t)$ must abstractly converge to $Q(C)$ in any fair schedule $S$, and we already know that $\sigma_S(C,t)$ concretely converges to $r_2$ in such $S$.
    \item $r_4\in I(d)$. This is because Lemma~\ref{lemma:invariance} ensures $\sigma_S(D,t)\in I(v)$ for any $t$ in any schedule $S$, and we can construct a concrete $S$ (specifically the synchronous schedule) such that $\sigma_S(D,2)=r_4$.
    \item $r_4\notin Q(D)$. This is because $\vcprop$ requires $Q(D)\subseteq Y(D)=\{\infty\}$.
    \item $(C,D)$ cannot be a \cbedge. This is because $\vcedge$ is not valid for $(C,D)$ --- we have $r_2\in Q(C)$, $r_4\in I(D)$ but $f_{CD}(r_2)\oplus r_4=r_4\notin Q(D)$.
\end{enumerate}

Note that $D$ also cannot be a \cbroot, otherwise Lemma~\ref{lemma:cbroots} ensures $\sigma_S(D,t)=\infty$ for any $t$, but we can construct a concrete $S$ such that $\sigma_S(D,2)=r_4$. Therefore, no path in \cbgraph can end with $D$, which contradicts with the assumption that the \cbgraph is connected.

This fact is formalized as \codelink{Counterexample.incomplete}{https://github.com/dz7903/cbgraphs-formalization/blob/322519adb499fbf4ec8516ba46523e710cf8b897/FormalCbgraphs/Incompleteness.lean\#L175} in Lean.
\end{proof}

\subsection{Detailed Specifications for Benchmarks}

\label{appendix-subsec:benchmark}

We describe details about the formal specifications on synthetic fattree and Internet2 benchmarks in this section.

\paragraph*{Fattrees networks.} We generate fattrees parameterized by the number of pods, $k$. The number of nodes is $1.25k^2$, and the number of edges is $k^3$.
Nodes of fattrees are divided into edge-nodes (top-of-the-rack nodes), aggregation-nodes and core nodes, and edge-nodes and aggregation-nodes must belong to certain pods \cite{al2008scalable}. For \bench{Hijack}, we add an extra "hijacker" node connecting with all the core-nodes.

Our network model requires us to choose a destination when we verify any property. We always use \texttt{edge0\_0}, the first edge-node in the first pod, as the destination. In the formulas below, we refer this destination as an constant $d$. The destination has an initial route with zero path length and an internal destination prefix. In \bench{Hijack}, we further give the hijacker-node an arbitrary initial route with an internal destination, to "hijack" the internal routes.
Also, our fattree network uses eBGP only; every node is assigned a different AS number.

\paragraph*{Reachability.} In \bench{Reachability} benchmark the policies between routers are simply forward any routes (that is, a single $\mathbf{permit}$ action). The properties to be verified are:
$$Y(v)=\{s\mid s\ne\infty\}.$$

We specify \interfaces as follows:
\begin{align*}
    I(v)&=\{s\mid s=\infty\lor s\ne\infty\} \\
    Q(v)&=\{s\mid s\ne\infty\}
\end{align*}

\paragraph*{Path length.} In \bench{PathLength} benchmark the policies are the same as \bench{Reachability}. We want to verify a stronger property that each router will select the shortest path to destination finally:
$$Y(v)=\{s\mid s\ne\infty\land\mathrm{len}(s)=dist(v)\}$$
where $dist(v)$ is the distance between node $v$ and the destination, and $\mathrm{len}(s)$ is the path length field in BGP route $s$.

To prove this property, we need to specify that:
\begin{enumerate}
    \item The local preference of routes are the same as the default value (100), so that longer paths do not have higher local preference to be selected;
    \item Before convergence, the router may select another route, but the length must not be less than the distance $dist(v)$.
\end{enumerate}

Therefore, the \interfaces for \bench{PathLength} are
\begin{align*}
I(v)&=\{s\mid s=\infty\Rightarrow\mathrm{lp}(s)=100\land\mathrm{len}(s)\ge dist(v)\} \\
Q(v)&=\{s\mid s\ne\infty\land\mathrm{lp}(s)=100\land\mathrm{len}(s)=dist(v)\}
\end{align*}

\paragraph*{Valley-free.} In \bench{ValleyFree} we check the valley-free property in fattrees, \IE no valley path (up-down-up path) is selected. To ensure this property, we modify the policies so that:
\begin{itemize}
    \item If a route is advertised through a ``down'' link (core to aggregation or aggregation to edge), a community tag 1:0 will be added.
    \item Routes with community tag 1:0 will be dropped along ``up'' links (aggregation to core or edge to aggregation).
\end{itemize}

We verify that each node will be reachable (reachability), and nodes that connect with destination with only ``up'' links (we call such nodes the \emph{uphill} nodes, and write $\mathrm{uphill}(d)$ as the set of all the uphill nodes when destination is $d$)
will not select a route with community tag 1:0.
$$Y(v)=\{s\mid s\ne\infty\land(v\in\mathrm{uphill}(d)\Rightarrow 1:0\notin\mathrm{comm}(s))\}$$
where $\mathrm{comm}(s)$ is the field of community tags of the BGP route $s$.

Due to asynchrony, a route carrying community $1:0$ may be selected before convergence, even if it is currently a shortest route at an uphill node. At convergence, however, every node selects a shortest route, and uphill nodes must not select a route carrying $1:0$.
The \interfaces should be specified as
\begin{align*}
I(v)&=\{s\mid s\neq\infty\Rightarrow
(lp(s)=100\land len(s)\ge dist(v))\} \\
Q(v)&=\left\{s\ \middle|\ \begin{aligned}
&s\ne\infty\land\mathrm{lp}(s)=100\land\mathrm{len}(s)=dist(v) \\
&\land(v\in\mathrm{uphill}(d)\Rightarrow 1:0\notin\mathrm{comm}(s))
\end{aligned}\right\}
\end{align*}

\paragraph*{Hijack.} In \bench{Hijack} we add a hijacker connecting with all core nodes as an external node that can advertise any route from outside.

The policies within the fattree are the same as those in \bench{Reachability} property (a single $\mathbf{permit}$ action). However, the import policies at core nodes from the hijacker will drop the route if it has an internal prefix. Also, the hijacker-node does not accept any route from the fattree (by dropping routes from core nodes).

We want to verify that no route coming from hijacker should be selected (route filtering property). To check whether a route is from hijacker or not, we utilize the AS path field in BGP --- for a BGP route $s$, we use $\mathrm{ASPath}(s)$ as the set of AS numbers in its AS path field. We also use $\mathrm{IntPrefixes}$ as the set of internal prefixes and $\mathrm{HijackAS}$ as the AS number of hijacker.

We specify the properties as
$$Y(v)=\begin{cases}
\{s\mid s\ne\infty\land\mathrm{HijackAS}\notin\mathrm{ASPath}(s)\} & v\ne\mathrm{hijacker} \\
\{s\mid \mathrm{True}\} & v=\mathrm{hijacker}
\end{cases}$$

The \interfaces are specified as follows (note that $I(\mathrm{hijacker})=Q(\mathrm{hijacker})$:
\begin{align*}
I(v)&=\begin{cases}
    \{s\mid s=\infty\lor\mathrm{HijackAS}\notin\mathrm{ASPath}(s)\} & v\ne\mathrm{hijacker} \\
    \{s\mid s\ne\infty\land\mathrm{prefix}(s)\in\mathrm{IntPrefixes}\} & v=\mathrm{hijacker}
\end{cases} \\
Q(v)&=\begin{cases}
    \{s\mid s\ne\infty\land\mathrm{HijackAS}\notin\mathrm{ASPath}(s)\} & v\ne\mathrm{hijacker} \\
    \{s\mid s\ne\infty\land\mathrm{prefix}(s)\in\mathrm{IntPrefixes}\} & v=\mathrm{hijacker}
\end{cases}
\end{align*}

\paragraph*{Internet2 network.} The Internet2 network has 10 internal nodes (sometimes we also refer as core nodes) and 283 external nodes. We consider multiple destination; in \bench{BlockToExternal} and \bench{InternalReachability} the destination could be any internal node, while in \bench{NoMartians} the destination could be any external node. In either case, we use $d$ as a symbolic variable representing the destination. The \cbgraph will be generated and checked on connectedness for each concrete destination.

\paragraph*{Block-to-external.} In \bench{BlockToExternal} property we want to verify that internal routes with a BTE community will not be advertised to the external peers.
In configuration, the destination has an initial route, which could contain BTE community or not.
We verify that for external neighbors ($v\in\mathrm{ExternalNodes}$), no route is selected if the initial route of the destination contains a BTE community:
\begin{align*}
    Y(v)=\begin{cases}
        \{s\mid s=\infty\} & v\in\mathrm{ExternalNodes}\land\mathrm{BTE}\in\mathrm{comm}(\mathrm{init}(d)) \\
        \{s\mid \mathrm{True}\} & \text{otherwise}
    \end{cases}
\end{align*}

The \interfaces are specified as
$$I(v)=Q(v)=\begin{cases}
\{s\mid s\ne\infty\Rightarrow\mathrm{BTE}\in\mathrm{comm}(s)\} \\
\qquad\qquad v\in\mathrm{InternalNodes}\land\mathrm{BTE}\in\mathrm{comm}(\mathrm{init}(d)) \\
\{s\mid s=\infty\} \\
\qquad\qquad v\in\mathrm{ExternalNodes}\land\mathrm{BTE}\in\mathrm{comm}(\mathrm{init}(d)) \\
\{s\mid\mathrm{True}\} \\
\qquad\qquad\mathrm{BTE}\notin\mathrm{comm}(\mathrm{init}(d))
\end{cases}$$

\paragraph*{No-martians.} In \bench{NoMartians}, we verify that no routes with a martian prefix (which is a list of certain IP prefixes) can be selected by an internal node. We give the destination (which is an external node) arbitrary initial route. We then verify that
$$Y(v)=\begin{cases}
    \{s\mid s\ne\infty\Rightarrow\mathrm{prefix}(s)\notin\mathrm{MartianPrefix}\} & v\in\mathrm{InternalNode} \\
    \{s\mid \mathrm{True}\} & v\in\mathrm{ExternalNode}
\end{cases}$$

The \interfaces are easy. They are specified the same as the property itself.
$$I(v)=Q(v)=Y(v)$$

\paragraph*{Internal reachability.} In \bench{InternalReachability}, we verify that any route advertised by an internal node will be selected by other internal nodes (actually, Internet2's policies between internal nodes do not drop or modify routes). We still give an arbitrary initial route for the destination. We verify that
$$Y(v)=\begin{cases}
    \{s\mid s\ne\infty\} & v\in\mathrm{InternalNode} \\
    \{s\mid \mathrm{True}\} & v\in\mathrm{ExternalNode}
\end{cases}$$
with \interfaces
\begin{align*}
    I(v)&=\{s\mid\mathrm{True}\} \\
    Q(v)&=Y(v)
\end{align*}

\end{document}